\NeedsTeXFormat{LaTeX2e}

\documentclass{article}
%\begin{filecontents*}{example.eps}
%!PS-Adobe-3.0 EPSF-3.0
%%BoundingBox: 19 19 221 221
%%CreationDate: Mon Sep 29 1997
%%Creator: programmed by hand (JK)
%%EndComments

%
\RequirePackage{fix-cm}

\usepackage{setspace}
%\doublespacing

\usepackage[misc,geometry]{ifsym} 

\usepackage{color}
\usepackage{srcltx}%\usepackage{fancybox}
\usepackage{fancyhdr}
\usepackage{latexsym}
\usepackage{amsfonts}
\usepackage{amscd}
\usepackage{epsfig}
\usepackage[cp850]{inputenc}
\usepackage[english]{babel}
\usepackage{ifthen}
\usepackage{float}
\usepackage{amsmath}
\usepackage{amssymb}
\usepackage{mathrsfs}
\usepackage[mathscr]{eucal}
\usepackage{dsfont}
\usepackage{amstext}
\usepackage{syntonly}

%EL SIGUIENTE PAQUETE LO HE PUESTO PARA LA REALIZACIN DE UN GR?FICO.
\usepackage{tikz} 
\usepackage{color}
\usepackage{graphicx}

\usepackage{enumerate}
\usepackage{amssymb}
\usepackage[english]{babel}
\usepackage{amsthm}
\usepackage{amstext}
\usepackage{graphicx}
 \usepackage{mathptmx}      % use Times fonts if available on your TeX system
%
% insert here the call for the packages your document requires
\usepackage{latexsym}
% etc.

  \usepackage{enumerate}

\newtheorem{theorem}{Theorem}[section] % 1st argument is your name for it
\newtheorem{lemma}[theorem]{Lemma}     % 2nd argument is what is printed
\newtheorem{corollary}[theorem]{Corollary}
\newtheorem{proposition}[theorem]{Proposition}
%To control the numbering sequence of these environments, see
%Lamport's book on LaTeX [2, p. 193].
\newtheorem{definition}[theorem]{Definition}   
%The \newnumbered command can be used to define environments or
%independent statements that DO NOT REQUIRE A PROOF. The usual ones are:
%\newnumbered{assertion}{Assertion}    % 1st argument is your name for it
%\newnumbered{conjecture}{Conjecture}  % 2nd argument is what is printed
%\newnumbered{definition}{Definition}
%\newnumbered{hypothesis}{Hypothesis}
\newtheorem{remark}[theorem]{Remark}
%\newnumbered{remark}{Remark}
%\newnumbered{note}{Note}
%\newnumbered{observation}{Observation}
%\newnumbered{problem}{Problem}
%\newnumbered{question}{Question}
%\newnumbered{algorithm}{Algorithm}

%\newunnumbered{notation}{Notation} % This is usually unnumbered
% The numbering sequence of these environments can be controlled in the
% same way as for \newtheorem; see Lamport's book on LaTeX, p. 193.

% The default LMS numbering of equations in long papers is (1.1), (1.2), (2.1), etc.
% In short papers, to change the numbering to (1), (2), etc., 'uncomment' the next line.
% \simpleequations
% Otherwise, use the AMS \numberwithin command.

% TOP MATTER

\usepackage{color}

\usepackage{enumerate}
\usepackage{amssymb}
\usepackage[english]{babel}
\usepackage{amstext}
\usepackage{graphicx}
 \usepackage{mathptmx}      % use Times fonts if available on your TeX system
%
% insert here the call for the packages your document requires
\usepackage{latexsym}
% etc.
%
% please place your own definitions here and don't use \def but
% \newcommand{}{}
\newcommand{\footremember}[2]{%
    \footnote{#2}
    \newcounter{#1}
    \setcounter{#1}{\value{footnote}}%
}

\date{}
\author{ Asier Estevan\footremember{trailer}{Dpto. Estad\'istica, Inform\'atica y Matem\'aticas, Instituto INAMAT, Universidad P\'ublica de Navarra.   Campus Arrosad\'{\i}a, 31006.  Iru\~na-Pamplona, Navarra, Spain.    }%
  }
\usepackage{color}
\usepackage{graphicx}

\usepackage{color}

\title{A Debreu's Open Gap Lemma for semiorders}

 \begin{document}

 %%%%%%%%%%%%%%%%%%%%% Publisher's Area please ignore %%%%%%%%%%%%%%%
%
%
%%%%%%%%%%%%%%%%%%%%%%%%%%%%%%%%%%%%%%%%%%%%%%%%%%%%%%%%%%%%%%%%%%%%

% \title{Debreu's Open Gap Lemma for semiorders: Continuous SS-representability}

 \maketitle

% \begin{history}
%\received{(received date)}
%\revised{(revised date)}
%\accepted{(Day Month Year)}
%\comby{(xxxxxxxxxx)}
%\end{history}

 \begin{abstract} 
 
% In 1956 R.D. Luce introduced in \emph{Econometrica} the notion of a semiorder to deal with indifference relations in the representation of a preference. During several years 
 
 The problem of finding a (continuous) utility function for a semiorder has been  studied
since  in 1956 R.D. Luce introduced in \emph{Econometrica} the notion. % to deal with indifference relations in the representation of a preference.
% until a representability characterization was found. However, 
There was almost no results on the  continuity of the representation. A similar result to Debreu's Lemma, but for semiorders, was never achieved. Recently, some necessary conditions for the existence of a continuous representation as well as some conjectures were presented by A. Estevan. %introdu
 In the present paper we prove these conjectures, achieving the desired version of  Debreu's Open Gap Lemma for bounded semiorders. This result allows to remove the open-closed and closed-open gaps of a subset $S\subseteq \mathbb{R}$, but now keeping the constant threshold, so that $x+1<y$ if and only if $g(x)+1<g(y) \, (x,y\in S)$. Therefore, the  continuous representation (in the sense of Scott-Suppes) of bounded semiorders is characterized. These results are achieved thanks to the key notion of $\epsilon$-continuity, which generalizes the idea of continuity for semiorders.
 
% we propose a characterization for the existence of a continuous representation (in the sense of Scott-Suppes) for bounded semiorders. As a matter of  fact, 
% the weaker but more manageable concept of $\epsilon$-continuity is properly introduced for semiorders. 
% As a consequence of this study, the desired version of the Debreu's Open Gap Lemma is presented (but now for the case of semiorders) and proved, which would allow to remove the open-closed and closed-open gaps of a subset $S\subseteq \mathbb{R}$, but now keeping the constant threshold, so that $x+1<y$ if and only if $g(x)+1<g(y) \, (x,y\in S)$.

  \end{abstract}

%\keywords{semiorders; continuous numerical representability}

 \section{Introduction}\label{s1} 

 In 1964 the Nobel laureate Gerard Debreu  proved in his famous  \emph{Debreu's Open Gap Lemma}   that, given any subset  $S\subseteq \mathbb{R}$,
then there is a strictly increasing function
$g\colon S\to \mathbb{R}$ such that all the gaps of
$g(S)$ are open. 
 Unfortunately, %there is no answerd when we also impose to $g$ to satisfy the geometrical condition $x \prec y \Leftrightarrow g(x) + 1 < g(y)$, for every $x,y \in X$. Thus, 
 to this day it is unknown when this strictly increasing  function $g$ exists  when we also impose to $g$ to satisfy the geometrical condition
 $x \prec y \Leftrightarrow g(x) + 1 < g(y)$, for every $x,y \in X$. \cite{D2}

The solution to the question before would solve the problem of the characterization of the existence of a continuous SS-representation of semiorders.
Although the notion of \emph{semiorder} is usually attributed to Luce in 1954, it was first  introduced by Wiener (as well as the concept of \emph{interval order}). %, but under  a different nomenclature.
 \cite{c,a,b} The notion of interval order generalize the idea of semiorder, and it was studied deeply by Fishburn in the 1970's. \cite{8,9,f1,f2} 
The use of both relations was due to the need of working with situations of intransitive indifference.

In the present paper we focus on semiorders and prove some conjectures introduced in \cite{liburu} related to the existence of a continuous representation (\emph{in the sense of Scott-Suppes} or \emph{SS-representation}, for short) for semiorders. As a result, we achieve a version  of  Debreu's Open Gap Lemma for bounded semiorders. The representability problem for semiorders was finally solved in \cite{CISS} (see also \cite{GgureaJ}), and the present paper closed the question on continuity of SS-representations for bounded semiorders.

\medskip

An asymmetric binary relation $\prec $ on $X$   is said to be a \emph{semiorder} if the following two conditions are satisfied:\\
\noindent$(1)$ $(x \prec y) \wedge (z \prec t)  \Rightarrow (x \prec t) \vee (z \prec y), \quad x,y,z,t \in X,$\\ 
\noindent$(2)$ $(x \prec y) \wedge {(y \prec z)}  \Rightarrow (x \prec w) \vee (w \prec z), \quad x,y,z,w \in X.$

A semiorder is said to be \emph{bounded} if there are no strictly increasing or decreasing infinite sequences, i.e., there is no sequence  $(x_n)_{n \in \mathbb{N}}\subseteq X$ such that $ \cdots \prec x_{n+1} \prec x_n \prec \cdots \prec x_1$ or $x_1\prec \cdots \prec x_n \prec x_{n+1} \prec \cdots $.

Due to condition $(1)$, a semiorder is a particular case of an \emph{interval order}.
Its symmetric part is denoted by $\precsim$, so that $a \precsim b  \Leftrightarrow \neg(b \prec a)$. 
The \emph{indifference} \rm  relation $\sim$ associated to $\prec$ is defined by 
 $a \sim b  \Leftrightarrow (a \precsim b) \wedge (b \precsim a).$
 It is well known that $\precsim$ and $\sim$ may fail to be transitive. \cite{8,9,Luc,SS}   
   A \emph{preorder} $\precsim$ on $X$ is a binary relation  which is reflexive and transitive. An antisymmetric preorder is said to be an \emph{order}. A \emph{total preorder} \rm $\precsim$ on a set $X$ is a preorder such that if $x,y \in X$ then $(x \precsim y) \vee (y \precsim x)$ holds. In the case of preorders, it is well known that the corresponding indifference is transitive, i.e., it is an equivalence relation.

The SS-representation    is defined by means of a single function  $u{ { }\colon}X \rightarrow \mathbb{R}$  such that $x \prec y \Leftrightarrow u(x) + 1 < u(y)$, for every $x,y \in X$. \cite{CISS,Luc,SS,vinc} 
When the set $X$ is   endowed with a topology $\tau$,   the  semicontinuity  or  continuity   of the numerical representations (if any) is also studied. \cite{Campi, CIZ,Ge}

The analogous problem related to the existence of a continuous representation but now for total preorders was solved by the   Gerard Debreu in 1964. \cite{D2}  For this purpose, 
%Given any subset $S$ of the real line $\mathbb{R}$,  Debreu  defined in 1964 
 a \emph{lacuna} of $S\subseteq \mathbb{R}$ was defined as a non-degenerate interval disjoint from $S$ and having a lower bound and an upper bound in $S$, and a \emph{gap} of $S$ as a maximal lacuna of $S$.
The famous \emph{Debreu's Open Gap Lemma}  reads as follows:  \cite{D2}

\begin{lemma}(\textbf{Open Gap Lemma})
%\cite{Debr}
If $S\subseteq \mathbb{R}$, then there is a strictly increasing function
$g\colon S\to \mathbb{R}$ such that all the gaps of
$g(S)$ are open. 
\end{lemma}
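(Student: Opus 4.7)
My plan is to build $g$ by collapsing the \emph{problematic} gaps of $S$ in the image, while leaving the others untouched.

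First, since gaps of $S$ are pairwise disjoint non-degenerate intervals in $\mathbb{R}$, they form a countable family $\{G_n\}_{n\ge 1}$. Writing $\alpha_n=\inf G_n$ and $\beta_n=\sup G_n$, I would call $G_n$ \emph{good} if both $\alpha_n,\beta_n\in S$, and \emph{problematic} otherwise. Good gaps require no action: under any strictly increasing $g$ the image of a good gap is the open interval $(g(\alpha_n),g(\beta_n))$, whose endpoints both lie in $g(S)$. The whole construction is therefore aimed at the problematic ones.

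Second, for each problematic $G_n$ I would prescribe a ``collapse''. If $\alpha_n\in S$ and $\beta_n\notin S$, force the images of any sequence $(y_k)\subseteq S$ with $y_k\searrow\beta_n$ to converge to $g(\alpha_n)$ from above, so that $g(\alpha_n)$ becomes a two-sided limit point of $g(S)$ and no gap of $g(S)$ survives immediately to its right. The case $\alpha_n\notin S,\beta_n\in S$ is symmetric. When both endpoints lie outside $S$, both approaching sequences are sent to a common target in $\mathbb{R}\setminus g(S)$, which becomes a two-sided limit point and hence is not the endpoint of any surviving gap.

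Third, I would realise all these collapses simultaneously via a formula of the form
\[
g(x) \;=\; \phi(x) \;+\; \sum_{n\ge 1} w_n\,\theta_n(x),
\]
where $\phi\colon S\hookrightarrow(0,1)$ is any initial bounded strictly-increasing embedding, the weights $(w_n)$ are positive and summable, and each $\theta_n\colon S\to[0,1]$ is a nondecreasing shift localised near $G_n$ that realises the prescribed collapse. Summability and smallness of the weights relative to the increments of $\phi$ preserve strict monotonicity, while the localisation of the $\theta_n$'s prevents distinct collapses from interfering with one another.

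The hard part is to verify that no new closed-endpoint gap is created by the cumulative construction, especially at accumulation points of the problematic endpoints $\alpha_n$ or $\beta_n$. This demands a careful case analysis at every such accumulation point, showing that the sum of shifts behaves one-sidedly in the correct way, so that every surviving gap of $g(S)$ still has both endpoints in $g(S)$. Once this bookkeeping is completed, the conclusion follows.
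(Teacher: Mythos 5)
The core of your construction cannot work as written, and the difficulty is not the deferred bookkeeping at accumulation points but the formula itself. (Note also that the paper offers no proof of this lemma: it is quoted from Debreu \cite{D2}, and Section~\ref{sDeb} only proves a \emph{weak} constructive version under the extra hypothesis that $S$ contains no discontinuous Cantor set.) To ``collapse'' a problematic gap, say one of the form $(\alpha_n,\beta_n]$ with $\alpha_n\in S$ and $\beta_n\notin S$, you need $\lim_{y\searrow\beta_n,\ y\in S}g(y)=g(\alpha_n)$; that is, you must \emph{cancel} the strictly positive jump $\inf\{\phi(y): y\in S,\ y>\beta_n\}-\phi(\alpha_n)$ that a generic strictly increasing embedding $\phi$ has across that gap. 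But in $g=\phi+\sum_n w_n\theta_n$ every correction term satisfies $w_n\bigl(\theta_n(y)-\theta_n(\alpha_n)\bigr)\ge 0$ for $y>\alpha_n$, since $w_n>0$ and $\theta_n$ is nondecreasing; a sum of nonnegative increments cannot produce the required strictly negative limit. Your architecture can only widen the jumps of $\phi$ across gaps, never close them, so none of the prescribed collapses is realizable.

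The defect is not repaired by flipping signs and subtracting localized steps equal to the jumps of $\phi$: strict monotonicity then genuinely fails. Take $S=C\setminus L$, where $C$ is the Cantor set and $L$ is the set of left endpoints of its removed intervals, and take $\phi=\mathrm{id}$. Every gap of $S$ has the form $[\lambda,\rho)$ with $\lambda\notin S$ and must be collapsed, and for $x<y$ in $S$ the total jump to be cancelled between them equals $\sum_{G_n\subseteq(x,y)}(\rho_n-\lambda_n)=y-x$ because $C$ is Lebesgue-null; the corrected function is then constant. This is exactly the ``discontinuous Cantor set'' obstruction that the paper's own constructive argument in Section~\ref{sDeb} assumes away. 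The lemma is nonetheless true for such $S$ --- the restriction of the Cantor function to $C\setminus L$ is strictly increasing with image $[0,1]$ --- but that $g$ is singular with respect to every such $\phi$: all of its growth is concentrated where $\phi$ contributes none. Any correct proof must therefore allow $g$ to be built from scratch (as Debreu does), not as a controlled perturbation of an arbitrary initial embedding, and it must handle precisely the case in which the problematic gaps exhaust the increment of the ambient parametrization.
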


 Unfortunately, there is no answerd when we also impose to $g$ to satisfy the geometrical condition $x \prec y \Leftrightarrow g(x) + 1 < g(y)$, for every $x,y \in X$, although it is known that this function $g$ may fail to exist. \cite{ij2013}

 \bigskip
 
  In order to approach to the desired results and construct a theorem such as Debreu's Open Gap Lemma but for semiorders, 
 the concept of $\epsilon$-continuity was succesfully introduced as a generalization on the idea of continuity for semiorders. \cite{liburu} 
In the case of semiorders,  there is an invariant threshold $k$ in the SS-representations (we may assume that $k=1$) that allows us to  compare the  length of each jump-discontinuity with this value $k=1$. Hence, it makes perfect sense to say that a semiorder is $r$-continuous (with $r>0$) if there exists a SS-representation   $(u,1)$ such that the length of each jump-discontinuity is bounded by this constant $r$. Then, we can approach to the idea of the usual continuity just tending $r$ to 0.
 Through this idea some conjectures were proposed. In the present paper we present the corresponding proofs.
 
 For more details  on this subject we suggest  the readings \cite{Ales,BR}. In particular,  for those notions related to the continuous representability of semiorders and $\epsilon$-continuity, we efusevely recommend to read \cite{liburu}. 

 \medskip
 
 \emph{The structure of the paper goes as follows:}
First, in Section~\ref{s2}, necessary conditions for the existence of a continuous SS-representation are collected and the  concept of $\epsilon$-continuity is presented. Then,   the image subset $u(X)$ is studied for a given SS-representation $(u,1)$ of a semiorder that satisfies the aforementioned necessary conditions. In next Section~\ref{sD}, some conjectures on the continuous SS-representability of semiorders are proved.
 Finally, in Section~\ref{sDeb}, the results obtained before are summarized but now abstracted from the context of semiorders, just as a version of the \emph{Debreu's Open Gap Lemma} but with additional component of a \emph{threshold}.

 \setcounter{footnote}{0}
 \renewcommand{\thefootnote}{\arabic{footnote}}

 \section{Necessary conditions for SS-representability}\label{s2}  

In the present paper we focus on the existence of continuous SS-representations. Hence, we shall assume that the semiorder is SS-representable. The characterization of the existence a SS-representation is known and it is made by means of \emph{s-separability} and \emph{regularity}.
 \cite{GgureaJ,CISS}

 \begin{theorem} \label{main} Let $X$ be a nonempty set. Let $\prec$ be a typical semiorder defined on $X$. Then, $\prec$ is representable in the sense of Scott and Suppes if and only if it is both s-separable and regular with respect to sequences. 
 \end{theorem}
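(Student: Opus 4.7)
The plan is to prove both directions of the characterization separately, with necessity being a direct transfer from the metric and order structure of $\mathbb{R}$, and sufficiency requiring an inductive construction on a countable skeleton followed by a controlled extension.

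For the necessity direction, I would assume $u\colon X\to \mathbb{R}$ is an SS-representation (so $x\prec y \iff u(x)+1<u(y)$) and derive both properties. Regularity follows from the archimedean property: any strictly $\prec$-increasing sequence $x_1\prec x_2\prec\cdots$ forces $u(x_{n+1})>u(x_n)+1$, yielding an unbounded real sequence, so no such infinite chain can exist inside a bounded range dictated by the semiorder; the decreasing case is symmetric. S-separability follows from the separability of $\mathbb{R}$ — one picks a countable dense subset of $u(X)$, pulls back appropriate witnesses into $X$, and checks that the resulting countable set separates the pairs $(x,y)$ with $x\prec y$ in the sense demanded by s-separability.

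For the sufficiency direction, I would start from s-separability to fix a countable set $D\subseteq X$ witnessing the semiorder relation. Enumerating $D=\{d_1,d_2,\ldots\}$, I would inductively define $u(d_n)$ so that at each stage the SS-condition $d_i\prec d_j \iff u(d_i)+1<u(d_j)$ holds among $\{d_1,\ldots,d_n\}$; regularity prevents the inductively chosen values from escaping to $\pm\infty$ or from accumulating in a way that leaves no room for the unit gap. After $u$ is defined on $D$, I would extend it to $X$ by setting, for each $x\in X\setminus D$, a value built from suprema and infima of $u$-values over the appropriate upper and lower traces of $x$ in $D$, and finally verify the SS-equivalence holds globally.

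The main obstacle is the extension step. Passing from a representation on $D$ to one on all of $X$ risks collapsing a strict relation $x\prec y$ into $u(x)+1=u(y)$, or conversely opening a false gap between two indifferent elements. Making the extension respect the \emph{exact} unit threshold is where both hypotheses are used jointly: s-separability ensures that the countable skeleton $D$ captures every borderline comparison so that no hidden behaviour is lost in the extension, while regularity guarantees that the suprema and infima used to define $u$ on $X\setminus D$ are finite real numbers placed at the correct distance from their $D$-neighbours. Carrying this bookkeeping out cleanly — in particular, handling elements $x\in X\setminus D$ that are indifferent to a whole interval of $D$ — is the technically delicate part of the argument.
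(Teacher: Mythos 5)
First, a remark on context: the paper does not prove this theorem at all. It is imported as a known background result from the cited literature (Candeal and Indur\'ain, \emph{J. Math. Psych.} \textbf{54} (2010)), so there is no internal proof to compare yours against. Judged on its own merits, your necessity direction is essentially right: the Archimedean argument for regularity (an increasing chain bounded above by some $b$ would force $u(x_n)\geq u(x_1)+(n-1)$ while $u(x_n)<u(b)-1$, and dually) and the pullback of a countable dense subset of $u(X)$ for s-separability are the standard arguments, modulo checking the exact formulation of s-separability in terms of the traces $\precsim^{*}$ and $\precsim^{**}$.

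The sufficiency direction, however, contains a genuine gap, and it is not located where you place it (the extension from $D$ to $X$) but in the inductive construction on the countable skeleton $D$ itself. If you fix $u(d_1),\dots,u(d_n)$ and then try to place $u(d_{n+1})$ so that $d_i\prec d_{n+1}\iff u(d_i)+1<u(d_{n+1})$ for all $i\leq n$, the admissible set for $u(d_{n+1})$ can be empty: the constraints mix strict inequalities (from $\prec$) with non-strict ones (from $\precsim$), and an assignment that represents $\{d_1,\dots,d_n\}$ may leave no room for a new element squeezed between two unit thresholds. Every finite semiorder is SS-representable (the original Scott--Suppes theorem), but only if one is allowed to revise \emph{all} previously assigned values at each stage; one must then show that these revisions can be organized so as to converge to a limit assignment, and this is precisely the obstruction that kept the Scott--Suppes representability problem open for decades. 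Your sentence ``regularity prevents the inductively chosen values from \dots accumulating in a way that leaves no room for the unit gap'' asserts the needed conclusion rather than proving it. The proof in the cited reference does not proceed by this greedy induction; it routes the construction through the total preorders associated to the traces and a finer decomposition of the semiorder. As written, your sketch would need the inductive step made precise --- with revision and a convergence argument --- before the extension step even becomes the live issue.
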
 
 
Now we collect a few concepts, results and those necessary conditions for the existence of a continuous SS-representation which are available in literature. \cite{base,BR,liburu,ij2013}.

%The following results are well known.  \cite{base,ij2013} For more information on traces the indifference $\sim^0$ go to \cite{liburu}.

 \begin{theorem}Let $\prec$ be an interval order defined on a set $X$. Then the indifference $\sim^0$ associated to the main trace is an equivalence relation. \end{theorem}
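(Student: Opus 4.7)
The plan is to verify that $\sim^0$ satisfies the three properties of an equivalence relation: reflexivity, symmetry, and transitivity. Recall that the main trace is defined by $x \precsim^0 y$ iff for every $z \in X$, both implications $z \prec x \Rightarrow z \prec y$ and $y \prec z \Rightarrow x \prec z$ hold; the indifference $\sim^0$ is the symmetric part of $\precsim^0$, i.e.\ $x \sim^0 y$ iff $x \precsim^0 y$ and $y \precsim^0 x$.

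Reflexivity and symmetry are immediate from the definitions. Every element trivially satisfies the tautologies $z \prec x \Rightarrow z \prec x$ and $x \prec z \Rightarrow x \prec z$, so $x \precsim^0 x$ for all $x$, whence $x \sim^0 x$; symmetry is built into the definition of $\sim^0$ as the symmetric part of $\precsim^0$.

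The substantive step is transitivity. First I would show that $\precsim^0$ itself is transitive, and then deduce transitivity of $\sim^0$. Suppose $x \precsim^0 y$ and $y \precsim^0 z$. For any $w \in X$: if $w \prec x$, then by the first clause of $x \precsim^0 y$ we have $w \prec y$, and then by the first clause of $y \precsim^0 z$ we have $w \prec z$. Symmetrically, if $z \prec w$, then by the second clause of $y \precsim^0 z$ we have $y \prec w$, and then by the second clause of $x \precsim^0 y$ we have $x \prec w$. This yields $x \precsim^0 z$. Transitivity of $\sim^0$ follows at once by applying this chain to $x \precsim^0 y \precsim^0 z$ and to $z \precsim^0 y \precsim^0 x$.

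There is no real obstacle here: the argument is a pure definitional chase and, perhaps surprisingly, never invokes the interval-order axiom. The only point deserving care is to adopt the correct \emph{two-sided} definition of the main trace, pairing a ``lower-set'' and an ``upper-set'' implication, so that transitivity closes on both ends. The interval-order hypothesis becomes relevant elsewhere (for example, to obtain totality of $\precsim^0$ on $X$), but not for the equivalence property asserted in the statement.
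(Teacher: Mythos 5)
Your proof is correct. The paper does not actually prove this theorem—it is quoted as a known result from the literature—and your argument is the standard one: $\precsim^0$ (defined two-sidedly as the intersection of the left and right traces) is a preorder by a purely definitional chase, so its symmetric part is an equivalence relation; your observation that the interval-order axiom is not needed here (it is only needed for totality of the trace) is also accurate.
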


 \begin{definition}\rm
  Let $\prec$ be an interval order defined on a topological space $(X,\tau)$.  The  topology $\tau$ is said to be \emph{compatible with respect to the indifference of the main trace of $\prec$} if $x \sim^0 y \Rightarrow (x \in \mathcal{O} \iff y \in \mathcal{O})$ holds true for every $x,y \in X$ and every $\tau$-open subset $\mathcal{O} \in \tau$.% 
  \end{definition}
 
 %\begin{remark}\label{Rohiko}
%\rm Notice that, according to the idea before, elements that are indistinguishable with respect to $\prec$  (because they play the same role on $(X,\prec)$) should also be indistinguishable from a topological point of view. % with respect to the topology $\tau$, they  should    (a fortiori) belong to the same $\tau$-open subsets. 
 
In particular, in the main case in  which $x\sim_0 y\iff x=y$, %i.e. when $X$ coincides with the quotient set $X\slash \sim_0$,
 the topology is always compatible. 
 
Next proposition may be found in \cite{base}. %in \cite{base} (see  Remark~8 in \cite{base}).
 
 \begin{proposition} \label{perp} Let $(X,\tau)$ be a topological space endowed with a semiorder $\prec$. Assume that $\tau$ is compatible with respect to the indifference of the main trace of $\prec$. Suppose also that $\prec$ is representable in the sense of Scott and Suppes by means of a pair $(u,1)$ with $u$ continuous. Then the total preorder $\precsim^0$ is $\tau$-continuous. \end{proposition}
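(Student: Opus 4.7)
My plan is to descend the problem to the quotient $X/{\sim^0}$ and then invoke the classical fact that a total preorder represented by a continuous utility is continuous. The key step is to show that under the two hypotheses, $u$ is necessarily constant on every $\sim^0$-equivalence class. Indeed, if there were $x\sim^0 y$ with $u(x)\neq u(y)$, separating $u(x)$ and $u(y)$ by disjoint open intervals of $\mathbb{R}$ and taking $u$-preimages would produce a $\tau$-open set containing $x$ but not $y$, contradicting the compatibility of $\tau$ with the indifference of the main trace.

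With this constancy in hand, let $q\colon X\to X/{\sim^0}$ be the quotient map and $\bar{\tau}$ the quotient topology. Then $u$ descends to $\bar{u}\colon X/{\sim^0}\to \mathbb{R}$, and $\bar{u}$ is $\bar{\tau}$-continuous: for any open $W\subseteq\mathbb{R}$, the set $u^{-1}(W)$ is $\tau$-open and saturated with respect to $\sim^0$, so its $q$-image is $\bar{\tau}$-open. Moreover $\bar{u}$ represents the induced total order on the quotient. The direction $u(x)\leq u(y) \Rightarrow x\precsim^0 y$ follows by unwinding the main-trace definition using $x\prec y\iff u(x)+1<u(y)$; the converse $x\prec^0 y \Rightarrow u(x)<u(y)$ is obtained from the witness supplied by $\neg(y\precsim^0 x)$, which in its two possible forms forces either $u(x)+1<u(w)\leq u(y)+1$ or $u(x)\leq u(w)+1<u(y)$, and hence $u(x)<u(y)$ in both cases.

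Finally, the classical result that a total preorder represented by a continuous utility is continuous yields that the quotient preorder is $\bar{\tau}$-continuous. Pulling its open strict contour sets back through the continuous map $q$ gives $\tau$-open strict contour sets for $\precsim^0$, which is exactly the $\tau$-continuity of $\precsim^0$. In my view the main obstacle is the first step: without compatibility of $\tau$ with $\sim^0$, the continuous $u$ need not be constant on $\sim^0$-classes, the quotient would carry no natural continuous utility, and the whole clean descent would collapse; once constancy is secured, everything else reduces to standard utility-theoretic arguments.
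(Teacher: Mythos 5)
The paper itself contains no proof of this proposition: it is imported from \cite{base} with the remark that it ``may be found'' there, so there is no in-text argument to compare yours against. Your proof is correct and self-contained. The decisive step is the one you single out: compatibility of $\tau$ with the indifference of the main trace forces $u$ to be constant on $\sim^0$-classes (separate $u(x)$ and $u(y)$ by disjoint open intervals and pull back along the continuous $u$), and the two implications $u(x)\leq u(y)\Rightarrow x\precsim^0 y$ and $x\prec^0 y\Rightarrow u(x)<u(y)$ follow from the Scott--Suppes equivalence exactly by the witness computations you give. One point worth making explicit: to close the biconditional $x\precsim^0 y\iff u(x)\leq u(y)$ (equivalently, $u(z)<u(x)\Rightarrow z\prec^0 x$) you must invoke the constancy on classes a second time --- if $x\precsim^0 y$ but $u(y)<u(x)$, then also $y\precsim^0 x$ by your first implication, hence $x\sim^0 y$, hence $u(x)=u(y)$, a contradiction. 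Once that is in place, the passage to the quotient is actually dispensable: $\{z: z\prec^0 x\}=u^{-1}\bigl((-\infty,u(x))\bigr)$ and $\{z: x\prec^0 z\}=u^{-1}\bigl((u(x),+\infty)\bigr)$ are $\tau$-open directly by continuity of $u$, which is precisely the $\tau$-continuity of $\precsim^0$, without introducing $X/\sim^0$, the quotient topology, or the classical representation theorem for total preorders. Either way the argument is sound; the quotient route just packages the same facts less economically.
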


From now on, we shall assume that  the topology of the space is compatible with respect to the indifference of the main trace of the semiorder (e.g. the quotient set $X\slash \sim^0$ coincides with $X$). Hence, by Proposition~\ref{perp}, we will assume that $\precsim^0$ is $\tau$-continuous.

% \end{remark}
With respect to  continuity,    the following result were proved too. \cite{liburu, ij2013}%, stated in terms of the topological compatibility just defined.
 
 %To start with, we recall some necessary conditions for the continuous Scott-Suppes representability of a semiorder, already stated in Ref. 32. 
 \begin{lemma} \label{lok} Let $(X,\tau)$ be a topological space endowed with a semiorder $\prec$. Assume that $\prec$ is representable in the sense of Scott and Suppes by means of a pair $(u,1)$ with $u$ continuous.  Then the following properties hold true:
 \begin{itemize}
 \item[\emph{(a)}] The semiorder $\prec$ is $\tau$-continuous. 
  \item[\emph{(b)}] If a net $(x_j)_{j \in J} \subseteq X$\footnote{$J$ denotes here a directed set of indices. Since this does not lead to confusion, we will use the same notation `$<$' of the order on the real numbers than for the partial order on the set of indices $J$.} converges to two points $a,b \in X$, then $a \sim^0 b$. 
 \item[\emph{(c)}] If a net $(x_j)_{j \in J} \subseteq X$ converges to $a \in X$, and $b,c \in X$ are such that $x_j \prec b \precsim a$ and also $x_j \prec c \precsim a$ for every $j \in J$, then $b \sim^0 c$. 
 \item[\emph{(d)}] If a net  $(x_j)_{j \in J} \subseteq X$ converges to $a \in X$, and $b,c \in X$ are such that $a \precsim b \prec x_j $ and also $a \precsim c \prec x_j $ for every $j \in J$, then $b \sim^0 c$.

\item[\emph{(e)}]Let $(x_j)_{j\in J}$ and $ (y_k)_{k\in K}$ be two nets such that they converge to the same point $a$ in $X$ and $(w_r)_{r\in R}$ (respectively $(z_s)_{s\in S}$) two $n$-adjoint nets of $(x_j)_{j\in J}$ (respectively, of $(y_k)_{k\in K}$) for some $n\in \mathbb{Z}-\{0\}$.
If there are two elements $b,c\in X$ such that
 $w_{r}\prec^0 b\prec^0 z_{s}$ as well as $w_{r}\prec^0 c\prec^0 z_{s}$ (for each $r\in R, s\in S$), then $b\sim^0 c.$ 
 
 Furthermore:
 
 \begin{enumerate}
 \item[$(e_1)$]
 If there exist two pairs of  adjoint nets $(z_s)\preccurlyeq^n (y_i)$  and $(w_r)\preccurlyeq^{n-1} (x_j)$   (dually, $(z_s)\preccurlyeq^{-n} (y_i)$ and $(w_r)\preccurlyeq^{-n+1} (x_j)$) such that   $(x_j)_{j\in J}$ and $ (y_i)_{i\in I}$  converge to the same point $a$ in $X$, and if there is a point $c$ such that $z_s\prec_0 c \prec w_r$ (resp.,  $w_r\prec c\precsim z_s$), then there is no $c_1$ such that  $z_s\precsim c_1 \prec c$ (resp., $ c \prec c_1\precsim z_s$).
  
  \item[$(e_2)$]
 If there exist two pairs of  adjoint nets $(z_s)\preccurlyeq^{n-1} (y_i)$  and $(w_r)\preccurlyeq^{n} (x_j)$   (dually, $(z_s)\preccurlyeq^{-n+1} (y_i)$ and $(w_r)\preccurlyeq^{-n} (x_j)$) such that   $(x_j)_{j\in J}$ and $ (y_i)_{i\in I}$  converge to the same point $a$ in $X$ with $y_i\precsim_0 a \precsim_0 x_j$, and if there is a point $c$ such that $z_s\precsim c \prec_0 w_r$ (resp.,  $w_r\prec_0 c\precsim z_s$), then there is at most one $c_1$ such that  $c\precsim c_1 \prec w_r$ (resp., $ w_r \prec c_1\precsim c$).
 
 \end{enumerate}

 \end{itemize}
  \end{lemma}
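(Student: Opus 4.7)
The strategy is to pull everything back to the real line through the continuous representation $u \colon X \to \mathbb{R}$. The key translations I use throughout are: $x \prec y$ iff $u(x)+1 < u(y)$; $x \sim^0 y$ iff $u(x) = u(y)$ (the main trace corresponds, after quotienting, to the natural order on representatives); and $x_j \to a$ in $\tau$ implies $u(x_j) \to u(a)$ in $\mathbb{R}$ by continuity of $u$. Each of the five parts will then reduce to a limit-and-squeeze argument on real numbers.

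For (a), the strict upper and lower contour sets of any $x \in X$ equal $u^{-1}((u(x)+1, +\infty))$ and $u^{-1}((-\infty, u(x)-1))$, hence are $\tau$-open. For (b), if $x_j \to a$ and $x_j \to b$, then $u(a) = \lim u(x_j) = u(b)$, so $a \sim^0 b$. For (c), the inequality $u(x_j)+1 < u(b)$ coming from $x_j \prec b$ passes to the limit as $u(a)+1 \le u(b)$, while $b \precsim a$ supplies the reverse bound $u(b) \le u(a)+1$; hence $u(b) = u(a)+1$, and the identical argument gives $u(c) = u(a)+1$. Part (d) is the dual statement, argued symmetrically with the inequality $u(x_j) > u(b)+1$. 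For the first assertion of (e), I would invoke the definition of $n$-adjoint net from \cite{liburu} to read off $u(w_r) - u(x_j) \to n$ and $u(z_s) - u(y_k) \to n$; since the base nets both converge to $a$, the shifts $u(w_r)$ and $u(z_s)$ both tend to $u(a)+n$. Feeding this into $w_r \prec^0 b \prec^0 z_s$ on the main trace forces $u(b) = u(a)+n$, and the same value is forced for $u(c)$, giving $b \sim^0 c$.

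The real work lies in $(e_1)$ and $(e_2)$, where the two adjoint nets sit at offsets that differ by exactly one. I expect the bookkeeping here to be the main obstacle: the indices $n$ and $n-1$ (or $n+1$) mean that $\lim u(w_r)$ and $\lim u(z_s)$ differ by precisely the threshold $1$, so any element $c$ placed between $z_s$ and $w_r$ under the hypotheses of $(e_1)$ or $(e_2)$ is trapped in a degenerate limiting interval of length $1$. I would then suppose a second element $c_1$ in the forbidden position stated in the hypothesis, read its $u$-value via the same squeeze, and derive a contradiction with the threshold condition. The strictness of `$\prec$' versus the non-strictness of `$\precsim$' on each adjoint chain is what forces the contradiction, so one must pay close attention to which passages to the limit produce strict inequalities and which collapse to equalities; the distinction between $(e_1)$ (uniqueness fails entirely) and $(e_2)$ (at most one such $c_1$ exists) comes precisely from this difference in strictness built into the hypotheses. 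The dual statements inside each of $(e_1)$ and $(e_2)$ follow by reversing the roles of $(w_r)$ and $(z_s)$ together with those of $(x_j)$ and $(y_i)$.
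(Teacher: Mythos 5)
Your proposal is essentially correct, and it follows the standard route that the paper itself relies on: the paper does not actually prove Lemma~\ref{lok} (items $(a)$--$(e)$ are imported from the cited references, and $(e_1)$, $(e_2)$ are explicitly ``left to the reader''), but the intended argument is exactly your reduction to $\mathbb{R}$ via the continuous $u$, with Lemma~\ref{Ladj} supplying $\lim u(w_r)=\lim u(z_s)=u(a)+n$ for part $(e)$ and a squeeze giving $u(b)=u(c)=u(a)+n$. One imprecision worth fixing: a Scott--Suppes representation does not satisfy ``$x\sim^0 y$ iff $u(x)=u(y)$'' in general; only the implication $u(x)\le u(y)\Rightarrow x\precsim^0 y$ (hence $u(x)=u(y)\Rightarrow x\sim^0 y$, and $x\prec^0 y\Rightarrow u(x)<u(y)$) is automatic, and fortunately this is the only direction your argument ever uses, so the proofs of $(b)$--$(e)$ stand as written. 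Your treatment of $(e_1)$ and $(e_2)$ is a plan rather than a proof, but the plan is sound: the two adjoint nets have $u$-limits differing by exactly $1$, the squeeze pins $u(c)$ to one endpoint, and any forbidden $c_1$ is then forced to a single $u$-value, yielding the contradiction in $(e_1)$ and the ``at most one'' (up to $\sim^0$, hence genuinely one under the standing compatibility assumption) in $(e_2)$; carrying out that bookkeeping would put you at the same level of completeness as the source.
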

  
  Throughout the paper, we shall refer to these conditions $(a)-(e)$ by (NC) (\emph{necessary conditions}). The last condition $(e)$\footnote{Subconditions $(e_1)$ and $(e_2)$ have been presented here for first time, however, since the technique of the proof is similar to that used in $(e)$, it is left to the reader.} is presented by means of \emph{$n$-adjoint nets}. This concept was introduced for first time in \cite{liburu}, as well as the following result:
  
\begin{lemma}\label{Ladj}
Let $\prec$ be a semiorder defined on a topological space $(X,\tau)$ and let $(u,1)$ be a continuous representation.
%%%%%% ADI, KONTINUA GEHITU BEHAR IZAN DIOT, BEHARREZKOA??'
 If $(x_j)_{j\in J}$ and $ (y_k)_{k\in K}$ are  $n$-adjoint nets, then $\lim_{j\in J} u(x_j)+n=\lim_{k\in K} u(y_k)$.
\end{lemma}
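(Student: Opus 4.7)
The plan is to reduce to the base case $|n|=1$ and then iterate, using the SS-equivalence $a\prec b \iff u(a)+1<u(b)$ together with the continuity of $u$. For $n=1$, one unpacks the definition of $1$-adjoint nets: for cofinally many paired indices $(j,k)$ one has $x_j\prec y_k$, and the SS-representation immediately yields $u(x_j)+1<u(y_k)$. Passing to the limit gives $\lim_j u(x_j)+1\le\lim_k u(y_k)$. The reverse inequality is the delicate one and uses the ``minimality'' built into the definition of adjointness: no further element can be interposed between the two nets at threshold distance, and combined with the continuity of $u$ this rules out any strict excess. Essentially, if $\lim_k u(y_k) > \lim_j u(x_j)+1+\delta$ for some $\delta>0$, then by continuity of $u$ one could select an element $c\in X$ with $u(c)$ sitting inside the gap, contradicting the tightness clause of the adjoint definition.

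For general $|n|>1$, the plan is to decompose the $n$-adjoint relation as an iterated composition of $1$-adjoint relations through a finite chain of intermediate nets $(x_j)=(z^0_\cdot)\preccurlyeq^{1}(z^1_\cdot)\preccurlyeq^{1}\cdots\preccurlyeq^{1}(z^n_\cdot)=(y_k)$, apply the base case to each link, and add the resulting one-step equalities to obtain $\lim u(y_k)=\lim u(x_j)+n$. For $n<0$ one simply interchanges the roles of the two nets and applies the positive case with $|n|$. Existence of the two limits in $\mathbb{R}$ follows along the way, since the chaining exhibits each $u$-net as being monotonically related, modulo a cofinal choice of indices, to the other through steps of amplitude exactly $1$.

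The main obstacle I anticipate is not the inductive step, which is bookkeeping, but the sharpness in the base case: guaranteeing $\lim u(y_k)\le\lim u(x_j)+1$ rather than merely $\le\lim u(x_j)+1+\delta$. This depends crucially on the continuity of $u$ at the common accumulation point of the nets (and on the compatibility of $\tau$ with respect to $\sim^0$ assumed throughout Section~\ref{s2}), since it is the continuity that translates the topological/order-theoretic minimality encoded in adjointness into a genuine arithmetical statement on $\mathbb{R}$. Care is needed because $(x_j)$ and $(y_k)$ need not themselves converge in $X$ in general, so the argument is best framed at the level of the real-valued nets $(u(x_j))$ and $(u(y_k))$, invoking continuity only where a convergent subnet in $X$ can be extracted.
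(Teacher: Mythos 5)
Your global strategy --- induction on $n$, with the $(n+1)$-adjoint relation factored through an intermediate net so that only the case $n=1$ needs real work --- is exactly the paper's. The divergence, and the problem, is in how you close the base case. The paper's definition of $(x_j)\preccurlyeq(y_k)$ is two-sided: besides the clause that for each $j_0$ one eventually has $x_{j_0}\prec y_k$ (which gives $u(x_{j_0})+1<\liminf_k u(y_k)$ and hence $\limsup_j u(x_j)+1\le\liminf_k u(y_k)$, as you say), it also requires that for each $k_0$ one eventually has $y_{k_0}\precsim x_j$. That second clause hands you the reverse inequality for free: $y_{k_0}\precsim x_j$ means $\neg(x_j\prec y_{k_0})$, i.e.\ $u(y_{k_0})\le u(x_j)+1$, so $\limsup_k u(y_k)\le\liminf_j u(x_j)+1$. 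The resulting sandwich
$$\limsup_j u(x_j)+1\le\liminf_k u(y_k)\le\limsup_k u(y_k)\le\liminf_j u(x_j)+1$$
simultaneously proves that both limits exist and that they differ by exactly $1$; no continuity and no interposed element is needed. (Continuity enters only in the degenerate case where one of the nets is constant, say $(y_k)=(b)$, where one uses that $(x_j)$ converges to some $a\in X$ with $b\precsim a$.)

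Your substitute for that step does not work. You propose to rule out $\lim_k u(y_k)>\lim_j u(x_j)+1+\delta$ by using continuity of $u$ to ``select an element $c\in X$ with $u(c)$ sitting inside the gap'' and then contradicting a tightness clause. Continuity of $u$ gives no such selection: $u(X)$ need not meet any prescribed subinterval of $\mathbb{R}$ (the whole paper is about representations whose images have gaps), and as you yourself note the nets need not converge in $X$, so there is no point at which to invoke continuity. The fix is not a cleverer topological argument but simply the missing half of the definition of adjointness; once that clause is in hand, your inductive bookkeeping for general $n$ (and the sign flip for $n<0$) goes through as in the paper.
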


It was already highlighted that 
conditions (c) and (d) are now particular cases of  condition $(e)$, in which $n=1.$ See  \cite{liburu} for more detail.
% From now on, we refer to all these necessary conditions as (NC).

We recover  
the following corollary that summarizes the structure   of any  SS-repre\-sen\-ta\-tion (i.e. not necessarily continuous) of a continuously representable semiorder, i.e., of a semiorder that satisfied  the necessary conditions (NC).% For the sake of brevity, we include a sketch of proof instead of a more detailed one.

\begin{corollary}\label{Cexp}
Let $\prec$ be a continuously representable semiorder on $(X,\tau)$, i.e.  satisfying the necessary conditions (NC). Let $(u,1)$ be a SS-representation. Suppose that there is a discontinuity at a point $a$ such that $[r,u(a))$ (or  $(u(a), r]$) is a bad gap (we call this initial assumption as \emph{initial condition} (ic)).

Then, $u(X)\cap [r+1,u(a)+1]$ (or $u(X)\cap [u(a)-1, r-1]$, respectively) has at most one point $s_1\in \mathbb{R}$ and one of the following situations holds:

\begin{itemize}
%\item$(a)$  $u(X)\cap [r+1,u(a)+1]$ (or $u(X)\cap [u(a)-1, r-1]$, respectively) has at most one point $s\in \mathbb{R}$. Then two cases may occur:
%\begin{enumerate}
\item[$(a_1)$] Depending on the existence of that point $s_1$, it holds that $[r+1,u(a)+1]$ (there is no $s_1$), $(r+1,u(a)+1]$ ($s_1=r+1$) or $[r+1,u(a)+1)$ ($s_1=u(a)+1$) is a gap, or $[r+1,u(a)+1]$ is the union of two consecutive gaps $[r+1,s_1)\cup (s_1,u(a)+1]$. 
In that case, $u(X)\cap [r+2,u(a)+2]$ (resp. $u(X)\cap [u(a)-2, r-2]$) has at most one point $s_2\in \mathbb{R}$ (such that $s_2\leq s_1+1$) and we will continue applying these cases $(a_1)$ or $(b_1)$, but now on $[r+2,u(a)+2]$ (resp. $ [u(a)-2, r-2]$).

\item[$(b_1)$]  %$u(X)\cap [r+1,u(a)+1]$ (or $u(X)\cap [u(a)-1, r-1]$, respectively) has at most one point $s\in \mathbb{R}$. Assume now that $[r+1,u(a)+1]$ (resp. $[u(a)-1, r-1]$) is not a gap with the exception of this point $s'$, that is, it is not a maximal lacuna. Thus,
If the case $(a_1)$ does not hold, then there exist $\gamma_l, \gamma_r\geq 0$, at least one of them is strictly positive, such that $u(X)\cap [r+1-\gamma_l, u(a)+1+\gamma_r]$ %or $u(X)\cap [r+1, u(a)+1+\epsilon]$
  (or $u(X)\cap [u(a)-1-\gamma_l, r-1+\gamma_r]$, respectively) has at most that point $s_1$.

 % In that case, $ [r+2,u(a)+2]$ (resp. $ [u(a)-2, r-2]$)  may contain more than one point. 
 % In fact, if there exists that point $s_1$ then  $ (s_1+t,u(a)+t]$ may be nonempty (for any $t\in \mathbb{N}$) if $ \gamma_r>0$, dually,  $ [r+2, s+1)$ may be nonempty if $ \gamma_l>0$.
  
  \begin{enumerate}
  \item[$(b_{11})$]   In case this point $s_1$ exists, then:
    \begin{enumerate}
  \item[$(b_{111})$]   If  $\delta_r=0$, then $(s_1+1,u(a)+2]=\emptyset$,  and  we shall continue arguing to the right with $(s_1+1,u(a)+2]$  as a gap of case (ci). %without restrictions %, for each $t=1,2,...,T_r$ whenever $(s_1+t-\epsilon,s_1+t]\cap u(X)\neq \emptyset$ for any $\epsilon>0$.
  \item[$(b_{112})$]   If   $\delta_l=0$, then $[r+2,s_1+1]$ may contain a unique point $s_2$.  We shall continue arguing to the right with $[r+2,s_2)$ as a gap of case (ci), in case $s_2$ exists, and without restrictions otherwise. % such that $s_t\leq s_{t-1}+1$ (in case $s_{t-1} $ exists), for any $t\in \mathbb{N}$.
  \item[$(b_{113})$]   If   $\delta_l>0$ and $\delta_r>0$, then we continue arguing to the right without restrictions.
  \end{enumerate}
  
  \item[$(b_{12})$]   In case there is no point $s_1$, then $ [r+2,u(a)+2]$ (resp. $ [u(a)-2, r-2]$)  may contain more than one point and we continue arguing to the right (resp. left) without restrictions. %, with $t\in \mathbb{N}$.% (whenever $\delta_l>0$ or $\delta_r>0$). 
  \end{enumerate}

\end{itemize}

Moreover, $u(X)\cap [r-1,u(a)-1)=\emptyset$ (or $u(X)\cap (u(a)+1, r+1]=\emptyset$, resp.). Here, again,  one of the following situations holds:

\begin{itemize}
\item[$(a_2)$]  If $[r-1,u(a)-1]$ or $[r-1,u(a)-1)$ (resp. with $[u(a)+1, r+1]$ or $[u(a)+1, r+1)$) is a gap (depending on the existence of that  adjoint point $u(a)-1$ (resp. $u(a)+1$)), then $ u(X)\cap [r-2,u(a)-2]$ (resp. $u(X)\cap [u(a)+2, r+2]$) has at most one point $s$, which is in fact the adjoint point $u(a)-2$ (resp. $u(a)+2$) in case $u(a)-1$ (resp. $u(a)+1$) exists, and we will continue applying these cases $(a_2)$ or $(b_2)$ but now on $ [r-2,u(a)-2]$ (resp. $ [u(a)+2, r+2]$).% is a lacuna with the exception  that it may contain one point. Furhermore, if $[r-2,u(a)-2]$ (resp. $ [u(a)+2, r+2]$) is a gap (with the exception of that possible point), then  $ [r-3,u(a)-3]$ (resp. $ [u(a)+3, r+3]$) is a lacuna too (again, with the exception   that it may contain one point).  And so on.

\item[$(b_2)$]  If the case $(a_2)$ before does not hold, %$[r-1,u(a)-1]$ (resp. $[u(a)+1, r+1]$) is not gap with the exception of that adjoint point $u(a)-1$ (resp. $u(a)+1$), that is, it is not a maximal lacuna, 
then there exist  $\gamma_l, \gamma_r\geq 0$, at least one of them is strictly positive, 
 such that $u(X)\cap [r-1-\gamma_l, u(a)-1+\gamma_r]$  (or $u(X)\cap [u(a)+1-\gamma_l, r+1+\gamma_r]$, respectively)
 % an $\epsilon>0$ such that $u(X)\cap [r-1-\epsilon, u(a)-1]$ or $u(X)\cap [r-1, u(a)-1+\epsilon]$ (or $u(X)\cap [u(a)+1-\epsilon, r+1]$ or $u(X)\cap [u(a)+1, r+1+\epsilon]$, respectively)
  has at most that point $u(a)-1$.
  % In that case, $ [r-2,u(a)-2]$ (resp. $ [u(a)+2, r+2]$)  may contain more than one point. In fact, if there exists that point $u(a)-1$,  then  $ [r-2,u(a)-2]$ may be nonempty if $ \gamma_l>0$.
   \begin{enumerate}
  \item[$(b_{21})$]   In case this point $s_1=u(a)-1$ exists, then:
    \begin{enumerate}
  \item[$(b_{211})$]   If  $\delta_l=0$, then $u(X)\cap [r-2,u(a)-2)=\emptyset$,   but notice that this case belongs to $(a_2)$.
  \item[$(b_{212})$]   If $\delta_l>0 $, then $[r-2, u(a)-2]$ may be nonempty, and we continue to the left without restrictions. % for any $t\in \mathbb{N}$.
  
  \end{enumerate}
  
  \item[$(b_{22})$]   In case there is no point $s_1=u(a)-1$, then $u(X)\cap [r-2,u(a)-2]$   may contain more than one point and we continue arguing to the left without restrictions.% (whenever $\delta_l>0$ or $\delta_r>0$).    
 %  \begin{enumerate}
 % \item[$(b_{111})$]   If  $\delta_l=0$, then $u(X)\cap [r-1-t,u(a)-1-t)=\emptyset$, for any $t\in \mathbb{N}$.
 % \item[$(b_{112})$]   If $\delta_l>0 $, then $[r-1-t, u(a)-1-t]$ may be nonempty, for any $t\in \mathbb{N}$.
 % \end{enumerate}
  \end{enumerate}
 
\end{itemize}

\end{corollary}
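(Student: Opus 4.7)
The plan is to exploit the initial bad gap $[r, u(a))$ to manufacture a pair of nets converging to $a$ whose $u$-values differ asymptotically by $u(a)-r$, and then iteratively translate this obstruction one unit at a time by invoking the necessary conditions (NC) together with Lemma~\ref{Ladj}.

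First I would fix a net $(x_j)_{j\in J}$ with $x_j\to a$ and $u(x_j)\to r^{-}$, which exists because $[r,u(a))$ is a gap with $r$ attained as a limit from below of $u(X)$, and pair it with the constant net $(a)$. To show that $u(X)\cap[r+1,u(a)+1]$ contains at most one point, I would assume two distinct points $p<q$ in this intersection, pick witnesses $y_p\in u^{-1}(p)$, $y_q\in u^{-1}(q)$, and apply condition $(e)$ of Lemma~\ref{lok} to $(x_j)$ and $(a)$, together with suitable $1$-adjoint nets supplied by the SS-representation. Since the topology is compatible with $\sim^0$, the resulting conclusion $y_p\sim^0 y_q$ forces $p=q$. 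The split between $(a_1)$ and $(b_1)$ then comes from whether the unique surviving point, if present, equals one of the adjoint values $r+1$ or $u(a)+1$ (case $(a_1)$), or is strictly interior (case $(b_1)$), in which case the strict positivity of $\gamma_l$ or $\gamma_r$ is read off from the absence of accumulation on one side or the other via Lemma~\ref{Ladj}.

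Next I would push the argument inductively to layer $k$. Each time a point $s_k$ survives in $[r+k,u(a)+k]$, the pair $(x_j),(a)$ combined with a $k$-adjoint net pinned at $s_k$ yields the source of obstruction at the next layer; by Lemma~\ref{Ladj} the $u$-values of $k$-adjoint partners converge to translates of the original limits, so the local picture near $s_k+1$ is governed by the refinements $(e_1)$ and $(e_2)$. These not only forbid two distinct witnesses in $[r+k+1,u(a)+k+1]$ but also dictate whether there is room to the left of $s_k+1$ (controlled by $\delta_l=0$) or to the right (controlled by $\delta_r=0$), producing the cascade $(b_{111})$–$(b_{113})$ and $(b_{12})$. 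The leftward statements $(a_2)$ and $(b_2)$ follow by the symmetric argument: condition $(e)$ applied with the role of $(x_j)$ and $(a)$ swapped forbids any $u(y)$ in $[r-1,u(a)-1)$, and the refinements $(e_1)$–$(e_2)$ shape the weaker restriction on $[r-2,u(a)-2]$.

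The main obstacle I foresee is the bookkeeping inside case $(b_1)$, specifically tracking when the unique candidate $s_k$ coincides with an adjoint value $r+k$ or $u(a)+k$ and when the room parameters $\gamma_l,\gamma_r,\delta_l,\delta_r$ are forced to vanish. This is where $(e_1)$ and $(e_2)$ are essential, because a naive application of $(e)$ alone yields only $\sim^0$-collapses of candidates and does not distinguish the presence of an adjoint value from that of an interior point. Once the induction hypothesis is set up to carry all four parameters $(\gamma_l,\gamma_r,\delta_l,\delta_r)$ at each layer, the remainder of the case analysis is a routine verification using Lemma~\ref{Ladj} to transport the bad gap upwards and downwards by integer shifts.
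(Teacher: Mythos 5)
Your toolkit is essentially the paper's: nets converging to $a$ with images tending to $r$ and $u(a)$, condition $(e)$ of Lemma~\ref{lok} applied through $n$-adjoint nets, and Lemma~\ref{Ladj} to transport the obstruction by integer shifts (the paper outsources the base step and the $(a_1)$-propagation to Propositions~4.9 and 4.10 of \cite{liburu} and derives the $(b_1)$ subcases from condition $(e)$, $(e_1)$, $(e_2)$). But two points in your set-up are wrong and they break the induction. First, the split between $(a_1)$ and $(b_1)$ is not ``$s_1$ equals an adjoint value versus $s_1$ is strictly interior'': case $(a_1)$ explicitly includes the configuration in which $s_1$ is strictly interior and $[r+1,s_1)\cup(s_1,u(a)+1]$ are two consecutive gaps. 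The actual dichotomy is whether the empty region is exactly $[r+1,u(a)+1]\setminus\{s_1\}$ --- equivalently, whether $u(X)$ accumulates at $r+1$ from below and at $u(a)+1$ from above, so that the lacunae are maximal --- or whether it extends strictly beyond on at least one side (the $\gamma_l,\gamma_r$ of $(b_1)$). Your classification would push the two-consecutive-gaps configuration into $(b_1)$ and lose the propagation that the corollary asserts there.

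Second, and more seriously, the engine of the propagation to layer $k+1$ in case $(a_1)$ cannot be ``a $k$-adjoint net pinned at $s_k$''. By Lemma~\ref{Ladj} a $k$-adjoint net of $(y_i)$ (resp.\ of $(x_j)$) must have image converging to $r+k$ (resp.\ $u(a)+k$), not to $s_k$; moreover condition $(e)$ needs adjoint nets of \emph{both} nets converging to $a$, with the candidate points sandwiched between them, and a single constant net at a preimage of $s_k$ supplies neither. The adjoint nets that drive the induction exist precisely because in case $(a_1)$ the intervals are gaps, i.e.\ maximal lacunae, so $u(X)$ accumulates at their endpoints; the surviving point $s_k$ is a passenger, not the source of the obstruction. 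Concretely, your mechanism produces nothing in the subcase of $(a_1)$ where no $s_1$ exists at all ($[r+1,u(a)+1]$ is a single gap): there is no point to pin a net at, yet the corollary still asserts that $u(X)\cap[r+2,u(a)+2]$ has at most one point. The repair is the paper's step: take the $1$-adjoint nets $(z_s)\succcurlyeq(y_i)$ and $(w_\rho)\succcurlyeq(x_j)$ converging to the gap endpoints $r+1$ and $u(a)+1$, and feed these (and their further adjoints at higher layers) into condition $(e)$.
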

\begin{proof}
First,  if there is a discontinuity at a point $a$ such that $[r,u(a))$ (or  $(u(a), r]$) is a gap, then  there is a net $(u(y_i))_{i\in I}$ converging to $r$ in $\mathbb{R}$ and there is another net $(u(x_j))_{j\in J}$ (it may be constant, i.e. $u(x_j)=u(a)$ for any $j\in J$) converging to $u(a)$ in $\mathbb{R}$. 

 The first statement before points $(a_1)$ and $(b_1)$, relative to the possible  existence of a unique point in $u(X)\cap [r+1,u(a)+1]$ (or $u(X)\cap [u(a)-1, r-1]$, respectively), is proved in Proposition~4.9 in \cite{liburu}. %\ref{Pcd}. 
 
   If $[r+1,u(a)+1]$ (resp. $[u(a)-1, r-1]$) is %a gap with the exception of this point $s$, 
 as described in case $(a_1)$, then notice that there exist adjoint nets
 %with respect to those that converge to $r $ in $\mathbb{R}$ and to $u(a)$: 
 $(z_s)_{s\in S}$ and $(w_r)_{r\in R}$ such that $(y_i) \preccurlyeq (z_s)$ and  $(x_j) \preccurlyeq (w_r)$. Hence, by Proposition~4.10 in \cite{liburu}, %\ref{Pe}
  we deduce that $ u(X)\cap [r+2,u(a)+2]$ (resp. $ [u(X)\cap u(a)-2, r-2]$) %is a lacuna with the exception that it may 
 contains at most one point $s'$.  We will continue arguing on  $ [r+2,u(a)+2]$ (resp. $ [u(a)-2, r-2]$).% is a gap (with the exception of that possible point), then by Proposition~\ref{Pe}, $ [r+3,u(a)+3]$ (resp. $ [u(a)-3, r-3]$) is a lacuna too (again, with the exception   that it may contain one point)

If $[r+1,u(a)+1]$ (resp. $[u(a)-1, r-1]$) is as described in  case $(b_1)$, i.e.,  there exist $\gamma_l, \gamma_r\geq 0$, at least one of them is strictly positive, such that $u(X)\cap [r+1-\gamma_l, u(a)+1+\gamma_r]$ 
  (or $u(X)\cap [u(a)-1-\gamma_l, r-1+\gamma_r]$, respectively) has at most that point $s_1$, then    
 the first three  possible cases (in which $s_1$ does exist) are directly deduced from condition $(e)$.  In case there is no point $s_1$, then (NC) says nothing about restrictions.

  The second part of the corollary is similarly proved.
  
  % there are not  adjoint nets with respect to  $(y_i)_{i\in I}$ and $(x_j)_{j\in J}$.  %those that converge to $r $ in $\mathbb{R}$ and to $u(a)$.

%Let's prove the possible cases. First, we suppose that there is a point $s_1=u(c_1)\in  u(X)\cap [r+1-\gamma_l, u(a)+1+\gamma_r]$. 

%HEMEN

%HEMEN

%\begin{enumerate}
%\item[$(b112)$] If $\gamma_l=0$, then there is an adjoint net $(y_i)_{i\in I}\preccurlyeq (z_k)_{k\in K}$ such that $0\leq u(c_1)-u(z_k)\leq 1$. Hence, it holds that $\lim u(y_i)+1=\lim u(z_k)\leq u(c_1)\leq u(a)+1$. Thus, for any $u(c)\in [r+2,s_1+1]$ it holds that $u(z_k)+1< u(c)\leq s_1+1$, so $\lim u(z_k)+1\leq u(c)\leq s_1+1$
%\end{enumerate}

% In that case, $ [r+2,u(a)+2]$ (resp. $ [u(a)-2, r-2]$)  may contain more than one point without violating any necessary condition. %, as it is shown in the following example. 
% The last part of this statement $(b_1)$ is deduced arguing in the existence of a continuous SS-representation.

%A similar argument holds for the second part $(a_2)-(b_2)$ correspondig to $[r-1,u(a)-1]$ and  $[u(a)+1, r+1]$.

\begin{figure}[htbp]
\begin{center}
\begin{tikzpicture}[scale=0.8]
\draw[thick] (-1.5,0) node[anchor=east] {$u(X)$};
    \draw[] (-1,0) -- (0.5,0);
    \draw[] (2,0) -- (3.5,0);
    \draw[] (5,0) -- (6.5,0);
    \draw[] (8,0) -- (9.5,0);

\draw[thick] (-0.75,0.4) node[anchor=east] {\small $0$};
\draw[thick] (0.75,0.4) node[anchor=east] {\small $0'5$};
\draw[thick] (3.75,0.4) node[anchor=east] {\small $1'5$};
\draw[thick] (6.75,0.4) node[anchor=east] {\small $2'5$};
\draw[thick] (0.75,0) node[anchor=east] {\small $)$};

\draw[thick] (5.25,0) node[anchor=east] {\small $($};
\draw[thick] (6.75,0) node[anchor=east] {\small $)$};
\draw[thick] (8.25,0) node[anchor=east] {\small $($};
\draw[thick] (3.75,0) node[anchor=east] {\small $)$};
\draw[thick] (-0.75,0) node[anchor=east] { $\bullet$};
\draw[thick] (2.25,0.4) node[anchor=east] {\small $1$};
\draw[thick] (2.25,0) node[anchor=east] { $\bullet$};
\draw[thick] (2.2,0) node[anchor=east] {\small $[$};

%\draw[thick] (5.25,0) node[anchor=east] { $\bullet$};
\draw[thick] (8.25,0.4) node[anchor=east] {\small $3$};
\draw[thick] (-0.85,0) node[anchor=east] {\small $[$};
%\draw[thick] (8.25,0) node[anchor=east] {\small $]$};
%\draw[thick] (8.25,0) node[anchor=east] { $\bullet$};
%\draw[thick] (5.25,-2) node[anchor=east] {$\bullet$};
%\draw[thick] (5.5,-1.75) node[anchor=east] {\small $1_2$};
%%%%%%%%%%%%%%%%%%%%%%%%%%%%%%%%%%%%%%%%%%%%

\draw[dashed] (0.55,0) -- (2,-2);
\draw[dashed] (2,0) -- (2,-2);
\draw[dashed] (6.55,0) -- (8,-2);
\draw[dashed] (3.55,0) -- (5,-2);
\draw[dashed] (5,0) -- (5,-2);
\draw[dashed] (8,0) -- (8,-2);
\draw[thick] (5.25,0.4) node[anchor=east] {\small $2$};
\draw[thick] (4.75,0) node[anchor=east] {\small $\bullet$};
\draw[thick] (4.75,0.4) node[anchor=east] { $s$};
\draw[thick] (7.25,0) node[anchor=east] {\small $\bullet$};
\draw[thick] (7.25,0.4) node[anchor=east] { $s'$};

\draw[thick] (5.25,-2.4) node[anchor=east] {\small $2$};
\draw[thick] (5.25,-2) node[anchor=east] { $\bullet$};
%%%%%%%%%%%%%%%%%%%%%%%%%%%%%%%%%%%%%%%%%%%

\draw[thick] (-1.5,-2) node[anchor=east] {$g(u(X))$};

  %  \draw[] (-1,-2) -- (1.7,-2);
    \draw[] (-1,-2) -- (9.5,-2);

\draw[thick] (-0.75,-2.4) node[anchor=east] {\small $0$};
\draw[thick] (1.9,-2.4) node[anchor=east] {\small $0'5$};
\draw[thick] (2.15,-2) node[anchor=east] {\small $)$};
\draw[thick] (-0.75,-2) node[anchor=east] { $\bullet$};
\draw[thick] (2.25,-2.4) node[anchor=east] {\small $1$};
\draw[thick] (2.25,-2) node[anchor=east] { $\bullet$};
\draw[thick] (2.2,-2) node[anchor=east] {\small $[$};

%\draw[thick] (5.25,0) node[anchor=east] { $\bullet$};
\draw[thick] (8.25,-2.4) node[anchor=east] {\small $3$};
%\draw[thick] (-0.85,-2) node[anchor=east] {\small $[$};
%\draw[thick] (8.25,-2) node[anchor=east] {\small $]$};
\draw[thick] (8.25,-2) node[anchor=east] { $\bullet$};
%\draw[thick] (5.25,-2) node[anchor=east] {$\bullet$};
%\draw[thick] (5.5,-1.75) node[anchor=east] {\small $1_2$};

%%%%%%%%%%%%%%%%%%%%%%%%%%%%%%%%%%%%%%%%%%%
%%%%%%%%%%%%%%%%%%%%%%%%%%%%%%%%%%%%%%%%%%%
\end{tikzpicture}
\caption{Ilustration of case $(a_1)$ of Corollary~\ref{Cexp}, for the particular values of $r=0.5$ and $u(a)=1$.}
\label{figureCorollaryExamplea1}
\end{center}
\end{figure}

\begin{figure}[htbp]
\begin{center}
\begin{tikzpicture}[scale=0.8]
\draw[thick] (-1.5,0) node[anchor=east] {$u(X)$};
    \draw[] (-1,0) -- (0.5,0);
    \draw[] (2,0) -- (3.2,0);
    \draw[] (5.3,0) -- (9.5,0);
%    \draw[] (8,0) -- (9.5,0);

\draw[thick] (-0.75,0.4) node[anchor=east] {\small $0$};
\draw[thick] (0.75,0.4) node[anchor=east] {\small $0'5$};
\draw[thick] (3.85,0.4) node[anchor=east] {\tiny $1'5$};
\draw[thick] (5.25,0.4) node[anchor=east] {\tiny $2$};
\draw[thick] (6.75,0.4) node[anchor=east] {\small $2'5$};
\draw[thick] (0.75,0) node[anchor=east] {\small $)$};

\draw[thick] (5.55,0) node[anchor=east] {\small $($};
\draw[thick] (3.45,0) node[anchor=east] {\small $)$};
\draw[thick] (5.25,0) node[anchor=east] {\small $|$};
\draw[thick] (6.75,0) node[anchor=east] {\small $|$};
\draw[thick] (8.25,0) node[anchor=east] {\small $|$};
\draw[thick] (3.75,0) node[anchor=east] {\small $|$};
\draw[thick] (-0.75,0) node[anchor=east] { $\bullet$};
\draw[thick] (2.25,0.4) node[anchor=east] {\small $1$};
\draw[thick] (2.25,0) node[anchor=east] { $\bullet$};
\draw[thick] (2.2,0) node[anchor=east] {\small $[$};

%\draw[thick] (5.25,0) node[anchor=east] { $\bullet$};
\draw[thick] (8.25,0.4) node[anchor=east] {\small $3$};
\draw[thick] (-0.85,0) node[anchor=east] {\small $[$};
%\draw[thick] (8.25,0) node[anchor=east] {\small $]$};
%\draw[thick] (8.25,0) node[anchor=east] { $\bullet$};
%\draw[thick] (5.25,-2) node[anchor=east] {$\bullet$};
%\draw[thick] (5.5,-1.75) node[anchor=east] {\small $1_2$};
%%%%%%%%%%%%%%%%%%%%%%%%%%%%%%%%%%%%%%%%%%%%

\draw[dashed] (0.55,0) -- (2,-2);
\draw[dashed] (2,0) -- (2,-2);
%\draw[dashed] (6.55,0) -- (8,-2);
\draw[dashed] (3.55,0) -- (5,-2);
\draw[dashed] (5.25,0) -- (5.75,-2);
\draw[dashed] (3.25,0) -- (4.5,-2);
\draw[dashed] (5,0) -- (5,-2);
\draw[dashed] (8.25,0) -- (8.75,-2);
\draw[dashed] (6.25,0) -- (7.5,-2);
%\draw[dashed] (8,0) -- (8,-2);
\draw[thick] (4.75,0) node[anchor=east] {\small $\bullet$};
\draw[thick] (4.75,0.4) node[anchor=east] { $s$};
%\draw[thick] (7.25,0) node[anchor=east] {\small $\bullet$};
%\draw[thick] (7.25,0.4) node[anchor=east] { $s'$};

\draw[thick] (5.25,-2.4) node[anchor=east] {\small $2$};
\draw[thick] (5.25,-2) node[anchor=east] { $\bullet$};
%%%%%%%%%%%%%%%%%%%%%%%%%%%%%%%%%%%%%%%%%%%

\draw[thick] (-1.5,-2) node[anchor=east] {$g(u(X))$};

  %  \draw[] (-1,-2) -- (1.7,-2);
    \draw[] (-1,-2) -- (9.5,-2);

\draw[thick] (-0.75,-2.4) node[anchor=east] {\small $0$};
\draw[thick] (1.9,-2.4) node[anchor=east] {\small $0'5$};
\draw[thick] (2.15,-2) node[anchor=east] {\small $)$};
\draw[thick] (-0.75,-2) node[anchor=east] { $\bullet$};
\draw[thick] (2.25,-2.4) node[anchor=east] {\small $1$};
\draw[thick] (2.25,-2) node[anchor=east] { $\bullet$};
\draw[thick] (2.2,-2) node[anchor=east] {\small $[$};

%\draw[thick] (5.25,0) node[anchor=east] { $\bullet$};
\draw[thick] (8.25,-2.4) node[anchor=east] {\small $3$};
%\draw[thick] (-0.85,-2) node[anchor=east] {\small $[$};
%\draw[thick] (8.25,-2) node[anchor=east] {\small $]$};
\draw[thick] (8.25,-2) node[anchor=east] { $\bullet$};
%\draw[thick] (5.25,-2) node[anchor=east] {$\bullet$};
%\draw[thick] (5.5,-1.75) node[anchor=east] {\small $1_2$};

%%%%%%%%%%%%%%%%%%%%%%%%%%%%%%%%%%%%%%%%%%%
%%%%%%%%%%%%%%%%%%%%%%%%%%%%%%%%%%%%%%%%%%%
\end{tikzpicture}
\caption{Ilustration of case $(b_{113})$ of Corollary~\ref{Cexp}, for the particular values of $r=0.5$ and $u(a)=1$.}
\label{figureCorollaryExampleb1}
\end{center}
\end{figure}

\begin{figure}[htbp]
\begin{center}
\begin{tikzpicture}[scale=0.8]
\draw[thick] (-1.5,0) node[anchor=east] {$u(X)$};
    \draw[] (-1,0) -- (0.5,0);
    \draw[] (2,0) -- (3.5,0);
    \draw[] (5,0) -- (6.5,0);
    \draw[] (8,0) -- (9.5,0);

\draw[thick] (-0.75,0.4) node[anchor=east] {\small $-2$};
\draw[thick] (0.75,0.4) node[anchor=east] {\small $-1.5$};
\draw[thick] (3.75,0.4) node[anchor=east] {\small $-0.5$};
\draw[thick] (6.75,0.4) node[anchor=east] {\small $0.5$};
\draw[thick] (0.75,0) node[anchor=east] {\small $)$};

\draw[thick] (5.25,0) node[anchor=east] {\small $($};
\draw[thick] (6.75,0) node[anchor=east] {\small $)$};
\draw[thick] (8.25,0) node[anchor=east] {\small $[$};
\draw[thick] (3.75,0) node[anchor=east] {\small $)$};
\draw[thick] (-0.75,0) node[anchor=east] { $\bullet$};
\draw[thick] (2.25,0.4) node[anchor=east] {\tiny $-1=s'$};
\draw[thick] (2.25,0) node[anchor=east] { $\bullet$};
\draw[thick] (2.2,0) node[anchor=east] {\small $[$};

%\draw[thick] (5.25,0) node[anchor=east] { $\bullet$};
\draw[thick] (8.25,0.4) node[anchor=east] {\small $1$};
\draw[thick] (-0.85,0) node[anchor=east] {\small $[$};
%\draw[thick] (8.25,0) node[anchor=east] {\small $]$};
%\draw[thick] (8.25,0) node[anchor=east] { $\bullet$};
%\draw[thick] (5.25,-2) node[anchor=east] {$\bullet$};
%\draw[thick] (5.5,-1.75) node[anchor=east] {\small $1_2$};
%%%%%%%%%%%%%%%%%%%%%%%%%%%%%%%%%%%%%%%%%%%%

\draw[dashed] (0.55,0) -- (2,-2);
\draw[dashed] (2,0) -- (2,-2);
\draw[dashed] (6.55,0) -- (8,-2);
\draw[dashed] (3.55,0) -- (5,-2);
\draw[dashed] (5,0) -- (5,-2);
\draw[dashed] (8,0) -- (8,-2);
\draw[thick] (5.25,0.4) node[anchor=east] {\tiny $s=0$};
\draw[thick] (5.25,0) node[anchor=east] {\small $\bullet$};
%\draw[thick] (4.75,0.4) node[anchor=east] { $s$};
%\draw[thick] (1.25,0) node[anchor=east] {\small $\bullet$};
%\draw[thick] (1.25,0.4) node[anchor=east] { $s'$};

\draw[thick] (5.25,-2.4) node[anchor=east] {\small $0$};
\draw[thick] (5.25,-2) node[anchor=east] { $\bullet$};
%%%%%%%%%%%%%%%%%%%%%%%%%%%%%%%%%%%%%%%%%%%

\draw[thick] (-1.5,-2) node[anchor=east] {$g(u(X))$};

  %  \draw[] (-1,-2) -- (1.7,-2);
    \draw[] (-1,-2) -- (9.5,-2);

\draw[thick] (-0.75,-2.4) node[anchor=east] {\small $-2$};
\draw[thick] (2.2,-2.4) node[anchor=east] {\small $-1$};
\draw[thick] (8.15,-2) node[anchor=east] {\small $)$};
\draw[thick] (-0.75,-2) node[anchor=east] { $\bullet$};
%\draw[thick] (2.25,-2.4) node[anchor=east] {\small $-2$};
\draw[thick] (2.25,-2) node[anchor=east] { $\bullet$};
\draw[thick] (8.2,-2) node[anchor=east] {\small $[$};

%\draw[thick] (5.25,0) node[anchor=east] { $\bullet$};
\draw[thick] (8.25,-2.4) node[anchor=east] {\small $1$};
%\draw[thick] (8,-2) node[anchor=east] {\small $[$};
%\draw[thick] (8.25,-2) node[anchor=east] {\small $]$};
\draw[thick] (8.25,-2) node[anchor=east] { $\bullet$};
%\draw[thick] (5.25,-2) node[anchor=east] {$\bullet$};
%\draw[thick] (5.5,-1.75) node[anchor=east] {\small $1_2$};

%%%%%%%%%%%%%%%%%%%%%%%%%%%%%%%%%%%%%%%%%%%
%%%%%%%%%%%%%%%%%%%%%%%%%%%%%%%%%%%%%%%%%%%
\end{tikzpicture}
\caption{Ilustration of case $(a_2)$ of Corollary~\ref{Cexp}, for the particular values of $r=0.5$ and $u(a)=1$.}
\label{figureCorollaryExamplea2}
\end{center}
\end{figure}

\begin{figure}[htbp]
\begin{center}
\begin{tikzpicture}[scale=0.8]
\draw[thick] (-1.5,0) node[anchor=east] {$u(X)$};
    %\draw[] (-1,0) -- (0.5,0);
    \draw[] (-1,0) -- (3.2,0);
    \draw[] (5.3,0) -- (6.5,0);
    \draw[] (8,0) -- (9.5,0);

\draw[thick] (-0.75,0.4) node[anchor=east] {\small $-2$};
\draw[thick] (0.9,0.4) node[anchor=east] {\small $-1.5$};
\draw[thick] (3.9,0.4) node[anchor=east] {\tiny $-0.5$};
\draw[thick] (6.75,0.4) node[anchor=east] {\small $0.5$};
\draw[thick] (0.75,0) node[anchor=east] {\small $|$};

\draw[thick] (3.45,0) node[anchor=east] {\small $)$};
\draw[thick] (5.55,0) node[anchor=east] {\small $($};
\draw[thick] (5.25,0) node[anchor=east] {\small $|$};
\draw[thick] (6.75,0) node[anchor=east] {\small $)$};
\draw[thick] (8.25,0) node[anchor=east] {\small $[$};
\draw[thick] (3.75,0) node[anchor=east] {\small $|$};
\draw[thick] (-0.75,0) node[anchor=east] { $\bullet$};
%\draw[thick] (2.25,0.4) node[anchor=east] {\tiny $-1=s'$};
\draw[thick] (2.25,0.4) node[anchor=east] {\small $-1$};
\draw[thick] (2.2,0) node[anchor=east] {\small $|$};

%\draw[thick] (5.25,0) node[anchor=east] { $\bullet$};
\draw[thick] (8.25,0.4) node[anchor=east] {\small $1$};
\draw[thick] (-0.85,0) node[anchor=east] {\small $[$};
%\draw[thick] (8.25,0) node[anchor=east] {\small $]$};
%\draw[thick] (8.25,0) node[anchor=east] { $\bullet$};
%\draw[thick] (5.25,-2) node[anchor=east] {$\bullet$};
%\draw[thick] (5.5,-1.75) node[anchor=east] {\small $1_2$};
%%%%%%%%%%%%%%%%%%%%%%%%%%%%%%%%%%%%%%%%%%%%

\draw[dashed] (0.25,0) -- (1.5,-2);
\draw[dashed] (2.3,0) -- (2.6,-2);
\draw[dashed] (6.55,0) -- (8,-2);
\draw[dashed] (3.25,0) -- (4.5,-2);
\draw[dashed] (5.3,0) -- (5.6,-2);
\draw[dashed] (8,0) -- (8,-2);
\draw[thick] (5.25,0.4) node[anchor=east] {\tiny $0$};
%\draw[thick] (5.25,0) node[anchor=east] {\small $\bullet$};
%\draw[thick] (4.75,0.4) node[anchor=east] { $s$};
%\draw[thick] (1.25,0) node[anchor=east] {\small $\bullet$};
%\draw[thick] (1.25,0.4) node[anchor=east] { $s'$};

\draw[thick] (5.25,-2.4) node[anchor=east] {\small $0$};
\draw[thick] (5.25,-2) node[anchor=east] { $\bullet$};
%%%%%%%%%%%%%%%%%%%%%%%%%%%%%%%%%%%%%%%%%%%

\draw[thick] (-1.5,-2) node[anchor=east] {$g(u(X))$};

  %  \draw[] (-1,-2) -- (1.7,-2);
    \draw[] (-1,-2) -- (9.5,-2);

\draw[thick] (-0.75,-2.4) node[anchor=east] {\small $-2$};
\draw[thick] (2.2,-2.4) node[anchor=east] {\small $-1$};
\draw[thick] (8.15,-2) node[anchor=east] {\small $)$};
\draw[thick] (-0.75,-2) node[anchor=east] { $\bullet$};
%\draw[thick] (2.25,-2.4) node[anchor=east] {\small $-2$};
\draw[thick] (2.25,-2) node[anchor=east] { $\bullet$};
\draw[thick] (8.2,-2) node[anchor=east] {\small $[$};

%\draw[thick] (5.25,0) node[anchor=east] { $\bullet$};
\draw[thick] (8.25,-2.4) node[anchor=east] {\small $1$};
%\draw[thick] (8,-2) node[anchor=east] {\small $[$};
%\draw[thick] (8.25,-2) node[anchor=east] {\small $]$};
\draw[thick] (8.25,-2) node[anchor=east] { $\bullet$};
%\draw[thick] (5.25,-2) node[anchor=east] {$\bullet$};
%\draw[thick] (5.5,-1.75) node[anchor=east] {\small $1_2$};

%%%%%%%%%%%%%%%%%%%%%%%%%%%%%%%%%%%%%%%%%%%
%%%%%%%%%%%%%%%%%%%%%%%%%%%%%%%%%%%%%%%%%%%
\end{tikzpicture}
\caption{Ilustration of case $(b_{22})$ of Corollary~\ref{Cexp}, for the particular values of $r=0.5$ and $u(a)=1$.}
\label{figureCorollaryExampleb2}
\end{center}
\end{figure}

\end{proof}

The concept of $\epsilon$-continuity generalized the idea of continuity for SS-representations of semiorders and it is used for the proofs of the conjectures. This concept was introduced for first time in \cite{liburu} as follows.

\begin{definition}\rm
Let $\prec$ be a semiorder on $(X,\tau)$. We shall say that the semiorder is \emph{$r$-continuous} (for a positive value $r\in \mathbb{R}$) if there exists a SS-representation   $(u,1)$ such that the length of each jump-discontinuity is strictly smaller than this constant $r$.
%\end{definition}

% It order to approximate to the idea of continuity, we may let $r$ tend  to $0$.  This motivates the following definition. 

%\begin{definition}\rm
%Let $\prec$ be a semiorder on $X$. 
We shall say that the semiorder is \emph{$\epsilon$-continuous} if for any $\epsilon>0 $ there exists a SS-representation   $(u_{\epsilon},1)$ such that the length of each jump-discontinuity is strictly smaller than the value $\epsilon$.
\end{definition}

As it is said, this concept is weaker than the usual continuity. In \cite{liburu} it is shown that necessary conditions (NC) for the usual continuity are not needed for the existence of an $\epsilon$-continuous SS-representation. Thus,  if a semiorder is continuously SS-representable, then it is $\epsilon$-continuous, however, there exist  $\epsilon$-continuous semiorders that fail to be  continuously SS-representable. 
Furthermore, there exist semiorders that fail to be  $\epsilon_0$-continuously SS-representable, for a given $\epsilon_0>0$ (with $\epsilon_0\leq 1$). \cite{liburu}

\section{Debreu's Open Gap Lemma for Bounded Semiorders: continuous SS-representability}\label{sD}

First, in order to simplify the proofs, in this section we shall argue on \emph{irreducible} semiorders.

\begin{definition}\rm
Let $X$ be a nonempty set and $\prec$ a semiorder on $X$. We say that the semiorder is \emph{irreducible} on $X$ if there is no partition $X_1\cup X_2$ of $X$ such that $x_1\prec x_2$ for any $x_1\in X_1$ and $x_2\in X_2$.
\end{definition}

We reduce our study to those semiorders since any other one may be studied and represented through its irreducible components. In fact, given a representable   semiorder  $\prec$ on $X$ such that $X_1\cup X_2$ is a partition of $X$ satisfying that $x_1\prec x_2$ for any $x_1\in X_1$ and $x_2\in X_2$, if we know two SS-representation $(u_1,1)$ and $(u_2,1)$ of $(X_1,\prec)$ and $(X_2,\prec)$ (respectively), then it is easy to construct a representation $(u,1)$ of the semiorder as follows:
\begin{center}
$u(x) =  \left\{  \begin{array}{lcl}
u_1(x) &;& x \in X_1, \\
u_2(x)+m &;& x \in X_2, \\ 
\end{array}\right.\medskip$
 \end{center}
 where $m= \sup u_1(X_1)-\inf u_2(X_2)+2.$
 
 \begin{remark}\label{Rirre}\rm
When dealing with continuity and non irreducible semiorders it is important to take into account the necessary conditions, in particular the first condition which indicates that the semiorder must be $\tau$-continuous. In fact, given a representable semiorder  $\prec$ on $(X,\tau)$ such that $X_1\cup X_2$ is a partition of $X$ satisfying that $x_1\prec x_2$ for any $x_1\in X_1$ and $x_2\in X_2$, if we know two continuous SS-representation $(u_1,1)$ and $(u_2,1)$ of $(X_1,\prec)$ and $(X_2,\prec)$, then the aforementioned representation $u$ is continuous too in case the semiorder $\prec $ is $\tau$-continuous, i.e. in case $X_1$ and $X_2$ are open in $(X,\tau)$.
 \end{remark}

Hence, in order to simplify the present work and to avoid redundancies, from now, we shall assume that the semiorder studied is irreducible.
Before we introduce our main results, we recover the following concept and proposition introduced in \cite{liburu}.% that will allow us to present a constructive  method to handle $\epsilon$-continuity under some appropiate conditions.

\begin{definition}\rm
Let $(X, \tau)$ be a topological space and $u\colon X \to \mathbb{R}$ a real function on $X$. Let $I=[a,b]$ be a bounded interval of the real line. A subset $\mathcal{C}=u(X)\cap I$ is said to be a \emph{discontinuous Cantor set} if it satisfies the following properties:
\begin{enumerate}
\item[($i$)] It has measure 0,
\item[($ii$)] it has an infinite number of gaps,
\item[($iii$)] every gap of $\mathcal{C}$ is a bad gap.
\end{enumerate}

If there is a bounded interval $I$ such that $\mathcal{C}=u(X)\cap I$ is a discontinuous Cantor set, then we will say that \emph{$u(X)$ contains a discontinuous Cantor set}. 
\end{definition}

\begin{remark}\rm
%\noindent$(1):$
Notice that, given a discontinuous Cantor set $\mathcal{C}=I\cap u(X)$, % and the smallest interval $I$ such that $\mathcal{C}\subseteq I$,
 then the sum of all the gaps of $\mathcal{C}$ is the length of the interval $I$.
\end{remark}

\begin{proposition}\label{Pgap}
Let $\prec$ be a bounded semiorder on $(X,\tau)$. % satisfying the necessary conditions (a)-(e).
 Let $(u,1)$ be a SS-representation. Then, there is no sequence of gaps $\{g_n\}_{n\in \mathbb{N}}$ such that the length of a gap $g_n$ is strictly smaller than the length of $g_{n+1}$, for any $n\in \mathbb{N}$. Hence,  there always exists a maximal gap, that is, a gap which length is the biggest.
\end{proposition}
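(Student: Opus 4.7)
The plan is to show first that the range $u(X)$ is a bounded subset of $\mathbb{R}$, and then to rule out an infinite strictly increasing sequence of gap lengths by a disjointness-plus-measure argument; the existence of a maximal gap will then follow immediately.

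I would begin by proving that boundedness of $\prec$ forces $u(X)\subseteq [\alpha,\beta]$ for some reals $\alpha,\beta$. If, on the contrary, $u(X)$ were unbounded above, one could pick $(z_n)\subseteq X$ with $u(z_n)\to +\infty$ and inductively select indices $n_1<n_2<\cdots$ so that $u(z_{n_{k+1}})>u(z_{n_k})+1$ for every $k$. The SS-representation property then yields an infinite $\prec$-chain $z_{n_1}\prec z_{n_2}\prec\cdots$, contradicting that $\prec$ is bounded; the symmetric argument rules out unboundedness below.

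Next, assume toward a contradiction that $\{g_n\}_{n\in\mathbb{N}}$ is a sequence of gaps of $u(X)$ with strictly increasing lengths $\ell_n$. Since every gap of $u(X)$ has a lower and an upper bound in $u(X)\subseteq [\alpha,\beta]$, each $g_n$ is contained in $[\alpha,\beta]$; and being maximal lacunae, the gaps are pairwise disjoint. Hence $\sum_{n\ge 1}\ell_n\le \beta-\alpha<+\infty$. But strict monotonicity forces $\ell_n\ge \ell_1>0$ for every $n$, so $\sum_{n\ge 1}\ell_n=+\infty$, a contradiction. For the final statement, let $L$ be the supremum of all gap lengths of $u(X)$; if $L$ were not attained, one could pick gaps with lengths strictly increasing and converging to $L$, contradicting what has just been shown. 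Thus $L$ is attained by some gap, which is the required maximal gap.

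The main (and essentially only) substantive step is the initial reduction to $u(X)$ bounded; once that is in place, the rest is elementary pigeonhole on a family of pairwise disjoint intervals inside a bounded interval.
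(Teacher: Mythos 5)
Your proposal is correct and follows essentially the same route as the paper: boundedness of the semiorder gives a bounded image $u(X)$, the pairwise-disjoint gaps then have summable lengths, and a strictly increasing sequence of lengths would force that sum to diverge. You simply supply the details (the chain argument for boundedness of $u(X)$ and the explicit contradiction) that the paper's proof asserts more tersely.
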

\begin{proof}
Since the semiorder is bounded, so is its representation and, therefore, the sum of the length of the gaps (denoted by $ \displaystyle\sum\limits_{n\in\mathbb{N}} L(g_n) $) is finite. Hence, we conclude that  the sequence 
 $\{L(g_n)\}_{n\in \mathbb{N}}$ converges to 0. In consecuence,  there always exists a maximal gap.
\end{proof}

The next corollary is directly deduced from the proposition before.

\begin{corollary}\label{Cgap}
Let $\prec$ be a bounded semiorder on $(X,\tau)$. % satisfying the necessary conditions (a)-(e).
 Let $(u,1)$ be a SS-representa\-tion. Then, for any gap $g$ there exists another smaller gap $g'$ such that there is no gap which length is strictly between the length of  $g$ and that of $g'$. 
\end{corollary}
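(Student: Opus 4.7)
The plan is to reduce directly to Proposition~\ref{Pgap} by applying it to an appropriate subfamily of gaps. Given a gap $g$, I would consider the subfamily $\mathcal{G}_g$ consisting of all gaps whose length is strictly less than $L(g)$. Because $(u,1)$ is a SS-representation of a bounded semiorder, the sum $\sum_{n} L(g_n)$ over all gaps is finite (as used in the proof of Proposition~\ref{Pgap}), and this finiteness is inherited by any subfamily; in particular the lengths of the elements of $\mathcal{G}_g$ form a null sequence.

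Hence, provided $\mathcal{G}_g$ is nonempty, the same argument used to obtain a maximal gap in Proposition~\ref{Pgap} produces a gap $g'\in\mathcal{G}_g$ whose length is maximum within $\mathcal{G}_g$. By construction $L(g')<L(g)$, and by the maximality of $g'$ in $\mathcal{G}_g$ there cannot exist any gap $g''$ with $L(g')<L(g'')<L(g)$: such a $g''$ would also belong to $\mathcal{G}_g$ and contradict the choice of $g'$. This yields the desired conclusion.

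The only delicate point, which is more a matter of bookkeeping than of substance, is the edge case in which $\mathcal{G}_g$ is empty, i.e.\ $g$ is already a gap of minimum length. In that situation the set of gap lengths is necessarily finite (since the lengths converge to $0$ but are bounded below by $L(g)>0$), so the corollary is understood either as vacuous at the minimum or, if another gap $g'$ of length $L(g)$ exists, fulfilled by taking $g'$ itself, since then trivially no gap has length strictly between $L(g')$ and $L(g)$.

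I do not expect any genuine obstacle: the corollary is essentially a restatement of Proposition~\ref{Pgap} relativised to the subfamily below $L(g)$, and the only care needed is to note that the boundedness hypothesis passes to this subfamily so that the maximal-gap argument can be reapplied.
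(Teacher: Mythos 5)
Your argument is correct and is exactly the deduction the paper intends: the paper gives no explicit proof, stating only that the corollary is ``directly deduced'' from Proposition~\ref{Pgap}, and your relativisation of the maximal-gap argument to the subfamily of gaps shorter than $g$ (plus the sensible reading of the edge case where $g$ already has minimal length) is the natural way to carry that out.
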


%We present the following example inspired in the Cantor set, which justifies the name of this new concept.

Now, we are ready to present our main theorems.
First, we introduce the weakest one. %, but interesting because of its constructive and recursive proof.

\begin{theorem}{\rm(\textbf{{The Weakest Theorem}})}\label{Tweakest}

\noindent
Let $\prec$ be a SS-representable and bounded semiorder on a topological space $(X,\tau)$ and $(u,1)$ a SS-representation. If it satifies the necessary conditions (NC) and there is no discontinuous Cantor set contained in $u(X)$, then it is $\epsilon$-continuously representable.
\end{theorem}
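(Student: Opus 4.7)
The plan is to fix $\epsilon > 0$ and produce, by iterative modification of $u$, a new SS-representation $(u_\epsilon, 1)$ in which every jump-discontinuity has length strictly less than $\epsilon$. First, I would exploit Proposition \ref{Pgap} together with boundedness: since $\sum_n L(g_n) < \infty$, only finitely many gaps of $u(X)$ have length at least $\epsilon$. Combined with Corollary \ref{Cgap}, this lets us enumerate the ``large'' bad gaps in decreasing order of size and process them one at a time.

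For each such gap $[r, u(a))$ of length $\ell > \epsilon$, I would linearly compress it below $\epsilon$. The Scott--Suppes threshold forces corresponding compressions at the shadow intervals $[r+k, u(a)+k]$ for $k \in \mathbb{Z}$, and Corollary \ref{Cexp} precisely describes the admissible structure of these shadows: either an analogous gap (cases $(a_1)$ or $(a_2)$) in which the propagation continues, or a constrained finite-point configuration (cases $(b_1)$ or $(b_2)$) that can be absorbed in a single coordinated step. Thus the correction needed to shrink one bad gap is determined by a chain of shadow modifications extending in the $\pm k$ direction until the propagation falls into one of the $(b)$-subcases and terminates.

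The role of the no-discontinuous-Cantor-set hypothesis is then crucial: iterated application of $(a_1)$ or $(a_2)$ within a bounded portion of $u(X)$ would, in the limit, produce infinitely many bad gaps within a bounded interval whose lengths sum exactly to the length of that interval — which is by definition a discontinuous Cantor set, contradicting the hypothesis. Hence each propagation chain must terminate in finitely many steps, and the shrinking of a single large bad gap becomes a \emph{finite}, coordinated redefinition of $u$ on the union of its shadows that preserves $x \prec y \iff u_\epsilon(x) + 1 < u_\epsilon(y)$. Iterating this procedure through the finitely many gaps of length $\geq \epsilon$ yields the desired $u_\epsilon$.

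The main obstacle I expect is bookkeeping: ensuring that the coordinated shrinking across a shadow chain does not inadvertently create new gaps of length $\geq \epsilon$ elsewhere, and that successive modifications remain compatible when two shadow chains overlap. Handling this cleanly will require invoking the fine case analysis of Corollary \ref{Cexp} (the subcases $(b_{111})$--$(b_{22})$, governing how much ``slack'' is available on each side of the shadow) and verifying that the total compression mass can always be distributed consistently along each finite chain. The conceptual content of the theorem lies exactly in the equivalence ``no Cantor set $\Longleftrightarrow$ every shadow chain is finite'', which is what enables the passage from the global boundedness of $\sum_n L(g_n)$ to an effective, local compression scheme.
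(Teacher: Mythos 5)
Your overall architecture (process bad gaps one at a time, propagate the modification to the integer-translate ``shadows'' via Corollary~\ref{Cexp}, terminate by boundedness) matches the paper's, but you have misplaced the one hypothesis that makes the argument work, and this leaves a genuine gap. In the paper's proof the gap $[r,u(a))$ is removed by \emph{expanding} the complement of the gap inside each unit interval by the factor $\frac{1}{1-\delta_1}$ (this is what keeps $\lambda(t)+1=\lambda(t+1)$ and hence preserves the SS-condition). The side effect is that every other bad gap in that unit interval is stretched by the same factor, so after removing the $n-1$ largest gaps the next one has length $l_n=\delta_n/(1-\sum_{k=1}^{n-1}\delta_k)$. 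The no-discontinuous-Cantor-set hypothesis is used exactly here: it guarantees $\sum_k\delta_k\leq 1-r_1<1$, so the cumulative expansion factor stays bounded and $l_n\to 0$, which is what lets you stop after finitely many steps with all remaining gaps below $\epsilon$. Your plan to treat only ``the finitely many gaps of length $\geq\epsilon$'' ignores this rescaling: each modification can promote previously small gaps above $\epsilon$, and without the $l_n\to 0$ recursion you cannot conclude that the procedure stabilizes. (If you instead intend a pure downward translation of everything above the gap, with no expansion, then the SS-equivalence $x\prec y\iff u(x)+1<u(y)$ breaks for pairs straddling the gap at distance close to the threshold, so some expansion/contraction scheme of the paper's type is unavoidable.)

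Second, the role you assign to the Cantor-set hypothesis --- forcing each shadow chain to terminate --- is not correct. The propagation through cases $(a_1)$, $(a_2)$ of Corollary~\ref{Cexp} moves through distinct unit translates $[r+k,u(a)+k]$, so it halts after at most $T=M+N$ steps simply because the semiorder is bounded; an infinite chain of shadows at integer offsets would not in any case produce ``infinitely many bad gaps within a bounded interval whose lengths sum to the length of that interval,'' which is what a discontinuous Cantor set requires. So the implication ``no Cantor set $\Longleftrightarrow$ every shadow chain is finite'' that you identify as the conceptual core is not the content of the theorem; the hypothesis is a quantitative control on the total measure of the bad gaps within a single unit interval, and your proof as sketched never invokes it where it is actually needed.
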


\begin{proof}
%\noindent \textbf{1st Step:}
Let $\precsim$ be a SS-representable and bounded semiorder on a topological space $(X,\tau)$.
By Proposition~\ref{Pgap} there is a maximal gap $G_1$ that generates a discontinuity at a point $a_1$. Let $\delta_1$ be the length of $G_1$. By Proposition~4.6 in \cite{liburu}, %\ref{Pdis},
 $G_1$ is of the form $ (u(a_1), r]$ or $[r, u(a_1))$.
Without loss of generality, we may suppose that $G_1=[r, u(a_1))\subseteq (0,1]$ with $u(a_1)=1$ and $r=1-\delta_1$, where $1>\delta_1>0$ is the length of the jump-discontinuity (it would be proved dually for $ (u(a_1), r]$),  otherwise we should traslate the set by a function $t_1(x)=x+1-u(a_1)$. 
Since $u(X)$ is bounded, we may divide it in $T$ unit subintervals, such that  $u(X)\subseteq I_{-M+1}\cup I_{-M+2}\cup \cdots\cup I_1\cup \cdots \cup I_N$, with $M+N=T$ and such that $[r, u(a_1))=[r,1)\subseteq [0,1]=I_1$.

% {\color{red} 
%Notice that given that biggest bad gap  $G_1=[r, u(a_1))=[r,1)$, by Corollary\ref{Cexp}, there is no bad gap containing 0. That is, there is no bad gap $(s,t]$ or $[s,t)$ with $s<0<t$.AGIAN HAU OHAR MODUAN JARRI}.

%Without loss of generality, we may suppose that $G_1=[r, u(a_1))$ with $u(a_1)=1$ and $r=u(a_1)-\delta_1^1$ (it would be proved dually for $ (u(a_1), r]$),  otherwise we should traslate the set by a function $t_1(x)=x+1-u(a_1)$. 
\medskip

%\noindent \textbf{2nd Step:}

Now, focusing  on  $I_1= [0,1]$, where the biggest gap  $G_1=[r, u(a_1))$ was found (we denote the decreasing family of the lengths of the bad gaps in $I_1$ by  $(\delta_n^1)_{n\in\mathbb{N}}$), we will construct a piecewise function $f_1^1$ on $u(X)\subseteq \mathbb{R}$ that will remove this gap.\footnote{Here, the superscript 1 relates the function $f_1^1$ (the first function of a family of functions $(f_n^1)_{n\in \mathbb{N}}$) as well as the family of gaps $(\delta_n^1)_{n\in\mathbb{N}}$ to the interval $I_1$.} 
After finish the work on this subset $I_1$, we will repeat the process but now focusing on another unit interval $I_{i_2}$ related to the next biggest bad gap.  We will  denote the decreasing family of the lengths of   bad gaps in $I_{i_2}$ (with respect to the initial function $u$) by  $(\delta_n^{i_2})_{n\in\mathbb{N}}$, and so on, so that  $(\delta_n^{i_k})_{n\in\mathbb{N}}$ denotes the decreasing family of the lengths of   bad gaps in $I_{i_k}$, for each $k=1,...,T$. 

In case of a bad gap $[k-\delta_l, k+\delta_r)$ or  $  (k-\delta_l, k+\delta_r] $ %(with $k\in \{-l+1,-l+2,..., r-1\}$)
 which is in the middle of two of those unit intervals $I_k$ and $I_{k-1}$, we will consider it in our algorithm as two consecutive bad gaps $[k-\delta_l,k)$ and $[k, k+\delta_r)$   (dually, $  (k-\delta_l, k ] $ and $  (k, k+\delta_r ] $), so that $\delta_l$ and $\delta_r$ are elements of the sequences of lengths of $(\delta_n^{k-1})_{n\in \mathbb{N}}$ and  $(\delta_n^{k})_{n\in \mathbb{N}}$, respectively. % such that its length $\delta$ can be descomposed as $\delta=\delta_l+\delta_r$

Coming back to $I_1$, fisrt, we define the corresponding sub-functions $\lambda^1_1, \lambda_2^1, \lambda_3^1$ and $c^1$, which are linear functions that will be applied adequately in each threshold interval in order to keep the rigid structure of the semiorder, i.e., in order to achieve another SS-representation (but now without the gap $G_1$). 

%By Proposition~\ref{Pcd}, $ [ r-1, u(a_1)-1]=[-\delta_1^1, 0]$ has at most one point $s_1=u(a_1)-1=0\in \mathbb{R}$. If $[ r-1 , u(a_1)-1]$ is a gap with the exception of this  possible  point $s_1$, then (by Corollary~\ref{Cexp}) $ [ r-2, u(a_1)-2]$ is a lacuna with the exception that it may content an adjoint point $s_2=u(a_1)-2=-1$. Again, if  $[ r-2 , u(a_1)-2]$ is a gap (with the exception of that possible point), then   $[ r-3 , u(a_1)-3]$  is a lacuna too (again, with the exception   that it may content a point $s_3$). And so on. So, we focus on that interval (in case it exists) $ [u(a_1)-m, r-m]$ which fails to be a gap (again, with the exception   that it may content one point $s_m=u(a_1)-m=1-m$), i.e. it is contained in a bigger interval $ [r-m-\gamma_l, u(a_1)-m+\gamma_r] $ (it may be open, and one --but not both--  of the parameters $\gamma_l$ or $\gamma_r$ may be 0) which it does is a gap (but never a bad gap associated to a discontinuity, since its length is bigger than the previous one).
By Corollary~\ref{Cexp}, if  $ [ r-1, u(a_1)-1]$ or   $ [ r-1, u(a_1)-1)$ is a gap, then $u(X)\cap  [ r-2, u(a_1)-2]$ has at most one point $s$ (which is in fact the adjoint point $u(a_1)-2$ in case $u(a_1)-1 $ exists), and we will continue applying the cases $(a_2)$ or $(b_2)$ of Corollary~\ref{Cexp}, until arrive to a $m\in \mathbb{N}$ such that  $ u(X)\cap [r-m-\gamma_l, u(a_1)-m+\gamma_r] $ (with $\gamma_l, \gamma_r \geq 0$ and, at least, one of them positive, which also depends on the existence of those single points described in Corollary~\ref{Cexp}) has at most  one point (the point $u(a_1)-m$ in case $u(a_1)-m+1$ exists). % BAI, ZUZENA DA!!
%HEMEN ARRETA KASU BEREZIARI ETA ONDOREN JARRAITU
The existence of this bigger gap (which does not suppose --by hypothesis, since the previus and smaller one $G_1$ we supposed to be the biggest-- a discontinuity) may allow the existence of elements on the following intervals $ [r-n, u(a_1)-n] $ for $n>m$. %To see that, first we distinguish two cases:

%{\color{red}
So, now, we define the following \emph{expansion}  function on the left side of the gap (i.e., for $x<r$):
 
\begin{enumerate}

\item $\lambda_1^1(x)=(x+n)\cdot \frac{1}{1-\delta_1^1}-n,$ %The \emph{expansion} function on %$  [u(a_1)-n-1, r-n]$, $ n\in \mathbb{N} $ with $ 0\leq n<m$:
$\, x\in [-n, r-n]$, $ n\in \mathbb{N} $ with $ 0\leq n<m$.
%\item  $\lambda_1(x)=(x+n)\cdot \frac{1}{1-\delta_1^1}-n-\frac{\delta_1^1}{1-\delta_1^1},$
% $\, x\in  [r+\delta_1^1-n, 1-n ]$, $ n\in \mathbb{N} $ with $0\leq  n<m,$

\item $\lambda_2^1(x)=(x+n)\cdot \frac{1}{1-\delta_1^1}-n, $  %And the same \emph{expansion} function but now on
 $\, x\in  [-n+\gamma_r, r-n-\gamma_l] $, $ n\in \mathbb{N} $ with $ n\geq m$. 

\item $\lambda_3^1(x)=1-n, $  %And the same \emph{expansion} function but now on
 $\, x\in  (r-n,1-n)$, $ n\in \mathbb{N} $ with $ 0\leq n<m$.

%\item  $\lambda_1(x)=(x+n)\cdot \frac{1}{1-\delta_1^1}-n-\frac{\delta_1^1}{1-\delta_1^1},$
% $\, x\in  [r+\delta_1^1-n+\gamma_r, 1-n] $, $ n\in \mathbb{N} $ with $0\leq  n<m,$
%$$\lambda_1(x)=(x+n)\cdot \frac{1}{1-\delta_1}-n.$$ %  \quad \text{when} \quad x\in u(X)\cap [n+\gamma_l,n-\delta_1-\gamma_r].$$

\end{enumerate}

And a \emph{contraction} function on $ [r-n-\gamma_l, u(a_1)-n+\gamma_r] $, $ n\in \mathbb{N} $ with $ n> m$, that reduces the length of this interval from $\delta_1^1+\gamma_l+\gamma_r$ to $\gamma_l+\gamma_r$: 
\begin{enumerate}
\item 
$c^1(x)=(x-(r-n-\gamma_l))\cdot \frac{\alpha_l+\alpha_2}{\delta_1^1+\gamma_l+\gamma_2}+\lambda_2^1(r-n-\gamma_l),$ % \quad \text{when} \quad x\in u(X)\cap (n-\delta_1, n), \, n\in \mathbb{N}$$
where $\alpha_l=-\lambda_2^1(r-n-\gamma_l)$ and $\alpha_r=\lambda_2^1(1-n+\gamma_r)$.

\end{enumerate}

Notice that $(r-n,1-n)\cap u(X)$ (for $0\leq n<m$) would contain at most one point, so $\lambda_3^1$ does not imply a contraction on$(r-n,1-n)\cap u(X)$. % we define no function on $(r-n,1-n)$ (for $0\leq n<m$) since this interval must be empty, see Corollary~\ref{Cexp}).

\medskip
%In case $\inf (S)=s>0 $, then we consider $g_{|[0,s]}=Id$, so that $g$ removes now the bad gaps on $[s,1]$, keeping the length of the interval (i.e. $1-s$) and assings to $s$ the value $g(s)$\footnote{This is always possible just by means of traslations and isometries.}.

Through a dual study on the right side of the gap, we define the following \emph{expansion} function:

\begin{enumerate}
%\item  $\lambda_1(x)=(x-n)\cdot \frac{1}{1-\delta_1^1}+n-\frac{\delta_1^1}{1-\delta_1^1},$
% $\, x\in  [r+\delta_1^1+n, 1+n ]$, $ n\in \mathbb{N} $ with $0\leq  n<m',$
%$$\lambda_1(x)=(x-n)\cdot \frac{1}{1-\delta_1}+n.$$ %  \quad \text{when} \quad x\in u(X)\cap [n,n-\delta_1].$$
\item  $\lambda_1^1(x)=(x-n)\cdot \frac{1}{1-\delta_1^1}+n,$
%The \emph{expansion} function on
 $\, x\in  [n, n+r ]$, $ n\in \mathbb{N} $ with $0<  n<m',$

%And also the  same \emph{expansion} function but now  on $ [n+\gamma_r', n+1-\delta_1-\gamma_l'] $, $ n\in \mathbb{N} $ with $ n\geq m'$: 
\item $\lambda_2^1(x)=(x-n)\cdot \frac{1}{1-\delta_1^1}+n,$ %-\frac{\delta_1^1}{1-\delta_1^1},$ 
 $\, x\in [1+n+\gamma_r', 1+n+r-\gamma_l' ] $, $ n\in \mathbb{N} $ with $ n\geq m'.$ %   \quad \text{when} \quad x\in u(X)\cap [n+\gamma_l,n-\delta_1-\gamma_r].$$
\item  $\lambda_3^1(x)=n+1,$
%The \emph{expansion} function on
 $\, x\in ( n+r , n+1) $, $ n\in \mathbb{N} $ with $0<  n<m'.$
  
% on  $ [r-n-\gamma_l, u(a_1)-n+\gamma_r] $ for $n>m$:

\end{enumerate}

Remember  that in $ [ n+r , n+1]$ (for $0< n<m'$) there is --at most-- one point $s$ (see Corollary~\ref{Cexp}), so $\lambda_3^1$ does not imply a contraction on $u(X)\cap ( n+r , n+1) $.

And the \emph{contraction} function on $ [ r+n-\gamma_l', n+1+\gamma_r'] $, $ n\in \mathbb{N} $ with $ n\geq m'$, that reduces the length of this interval from $\delta_1^1+\gamma_l+\gamma_r$ to $\gamma_l+\gamma_r$. Here, if there is no point $s$ in $ [ r+m'-\gamma_l', m'+1+\gamma_r'] $, then the contraction function is defined as follows:
\begin{enumerate}
\item $c^1(x)=(x-(r+n-\gamma_l'))\cdot \frac{\alpha_l'+\alpha_r'}{\delta_1^1+\gamma_l'+\gamma_r'}+\lambda_2^1(r+n-\gamma_l'),$
where $\alpha_l'=-\lambda_2^1(r+n-\gamma_l')$ and $\alpha_r'=\lambda_2^1(1+n+\gamma_r')$.

\end{enumerate} 
%$$c_1(x)=(x-(r-n-\gamma_l'))\cdot \frac{\gamma_l'+\gamma_r'}{\delta_1+\gamma_l'+\gamma_r'}+(r-n-\gamma_l').$$ % \quad \text{when} \quad x\in u(X)\cap (n-\delta_1, n), \, n\in \mathbb{N}$$
However, if there is a point $s$ in $ [ r+m'-\gamma_l', m'+1+\gamma_r'] $ (see Corollary~\ref{Cexp}), then $c^1(x)$ is defined as before except for a possible point:% the contraction function $c^1$ is devided in two parts:

\begin{enumerate}
\item 
If $\delta_r=0$, then $\delta_l>0$ and $(s+1, m'+2]\cap u(X)=\emptyset$. %In case there is $\alpha\in X$ such that $u(\alpha)=s+1$, then we imposse   $c^1(u(\alpha))$ to be 
Then,
$c^1_1(x)=(x-(r+n-\gamma_l'))\cdot \frac{\alpha_l'}{s-r-n+\gamma_l'}+\lambda_2^1(r+n-\gamma_l'),$ for any $ x\in [ r+n-\gamma_l', s+n-m'], $ and  $(s+1, n+1]$ is contracted to $n+1$,m with $n=m',m'+1$.
We shall continue arguing to the right with $(s_1+1,m'+1]$  as a gap of case (ci)
%$\gamma_l'$ must be strictly possitive in case $ [ r+m'+1 , s+1]\cap u(X)\neq \emptyset$  acording to  Corollary~\ref{Cexp};
\item 
If $\delta_l=0$, then $\delta_r>0$ and $(r+m'+1, s+1)\cap u(X) $ may contain a unique point $s'$. %In case there is $\alpha\in X$ such that $u(\alpha)=s'$, then we imposse   $c^1(u(\alpha))$ to be $s+1.$
Then, 
$c^1_2(x)=(x-(s+1))\cdot \frac{\alpha_r'}{n+1+\gamma_r'-s}+n$ %+(s+1),$
 for any $ x\in [ s+n-m', n+\gamma_r' ], $ and 
$(r+n, s+n-m')$  is contracted to $n$, with $n=m',m'+1$.
We shall continue arguing to the right with $[r+2,s')$ as a gap of case (ci), in case $s'$ exists, and without restrictions otherwise.
% $\gamma_r'$ must be strictly possitive in case $ [ s+1, m'+1  ]\cap u(X)\neq \emptyset$ acording to  Corollary~\ref{Cexp}.
\item If $\delta_l>0$ and $\delta_r>0$, then we apply $c^1_1(x)$ in $[ r+n-\gamma_l', s+n-m']$ and $c^1_2(x)$ in $[ s+n-m', n+\gamma_r' ].$
\end{enumerate}

We will argue similarly in case of a bad gap of the form  $G_1=( u(a_1), r]$, % (where we may assume that $u(a_1)=1$).  
 constructing the corresponding functions.
 
It is straightforward to see that $\lambda^1_i(t)+1=\lambda^1_i(t+1)$ (for any $i=1,2,3$) as well as $c^1_k(t)+1=c^1_k(t+1)$ (for any $k=1,2$) for any $t\in u(X)$. Hence, after applying this piecewise function $f_1^1$ on $u(X)$, another SS-representation $(u_1^2, 1)$ is achieved, but now without the aforementioned gap $G_1$.

%\noindent \textbf{4rd Step:}

 By Corollary~\ref{Cgap}, the next biggest gap $G_2$ in $I_1$ --which length with respect to $u$ is $\delta_2^1$-- is selected and we continue with the process, achieving another representation $(u^3,1)$.

%\noindent \textbf{5th Step:}
\medskip

%We focus first on  $I_1=[0,1]$, where the biggest gap  $G_1=[r, u(a_1))$ (with $u(a_1)=1$) was found.
Let's see that, given any $\epsilon_0>0$, this process arrive to a point  such that the represesentation $(u_1^n,1)$ is $\epsilon_0$-continuous in  $I_1=[0,1]$.

%We focus first on  $[u(a)-1, u(a_1))=[0,1)$, where the biggest gap  $G_1=[r, u(a_1))$ was found.
Let $\{\delta^1_n\}_{n\in \mathbb{N}}$ be the sequence of lengths corresponding to  the bad gaps $\{G_n\}_{n\in \mathbb{N}}$ (ordered from bigger to smaller, see Corollary~\ref{Cgap}) associated to the initial function $u=u_1$ of the SS-representation $(u, 1)$ on $I_1=[0,1]$. Obviously, $\delta_n^1$ tends to 0 when $n$ tends to infinity (see Proposition~\ref{Pgap}). Furthermore,  notice that, since there is no discontinuous Cantor set contained in $u(X)$, the sum $\sum_{k=1}^{+\infty} \delta^1_k$ is strictly smaller than 1 (i.e. there exists $r_1>0$ such that $\sum_{k=1}^{+\infty} \delta^1_k \leq 1-r_1$). 

We denote the family of bad gaps in $I_1$ corresponding to the representation $u_1^2=f_1^1\circ u_1$ by $\{f_1(G_n)\}_{n\in \mathbb{N}\setminus \{1\} }$ $= \{G_n^2\}_{n\in \mathbb{N}} .$ 
%{\color{red}In the worst case\footnote{if a gap is contracted several times what?}}, 
The length of the biggest gap $G_1^2$ corresponds to the expansion of the  gap $G_2$ of $u_1(X)$, thus, the length of the  gap $G_1^2$ is $l_2= \frac{\delta_2^1}{1-\delta_1^1}$.
This will be repeated again and again with each function $f_2^1, f_3^1,..., f_n^1$. 

Therefore, we are able to define the sequence $\{l_n\}_{n\in \mathbb{N}}$ of   lengths associated to the biggest bad  gap of  each representation $(u_1^n,1)$ in $I_1$.  This sequence is defined recursively as follows:

$$ l_1=\delta_1^1$$%, \,  
$$l_2=\delta_2^1\cdot \frac{1}{1-\delta_1^1}=\frac{\delta_2^1}{1-l_1}$$%, \, 
$$l_3= \delta_3^1\cdot\frac{1}{1-\delta_1^1}\cdot  \frac{1  }{1-\frac{\delta_2^1}{1-\delta_1^1}}  =\frac{\delta_3^1}{1-\delta^1_1-\delta^1_2}=\frac{\delta_3^1}{(1-l_1)\cdot (1-l_2)} $$%%\frac{\frac{\delta_3^1}{1-\delta_1^1}}{1-l_2}=
$$...$$
$$  l_n=\frac{\delta_n^1}{1- \sum_{k=1}^{n-1} \delta^1_k }=\frac{\delta_{n}^1}{\Pi_{k=1}^{n-1}(1-l_k)}.$$

Thus, since (as said before)  the sum $\sum_{k=1}^{+\infty} \delta^1_k$ is strictly smaller than 1 and $\{\delta_n^1\}_{n\in \mathbb{N}}$ tends to 0 when $n$ tends to infinity, we conclude that $ \{l_n\}_{n\in \mathbb{N}}$ tends to 0 when $n$ tends to infinity.

Therefore, for any $\epsilon_1> 0$ there is always a finite number $n_0$ such that the length of the biggest bad gap   on  $ f_{n_0}^1\circ f_{n_0-1}^1\circ \cdots \circ f_1^1 ( u(X)\cap [0,1])$ is smaller than  $\epsilon_1$ (we shall denote $f^1=f_{n_0}^1\circ f_{n_0-1}^1\circ \cdots \circ f_1^1$). In the limit, we reduce the measure of the union of bad gaps to 0, stretching $\frac{1}{1-\sum\limits_{n\in \mathbb{N}} \delta^1_n}$ times the subset $u(X)\cap [0,1]$.

After achieve the desired result on %$[u(a_1)-1, u(a_1))$
 $[0,1]$, we choose now the next biggest gap (if it exists) which lies in $I_{i_2}$, for some $i_2\in \{-M+1,...,N\}$. Again (because of traslations)  we may assume that it is of the form  $[u_2(a_2)-\delta_1^2, u_2(a_2))\subseteq [0, 1] $ (or the dual $(u_2(a_2), u_2(a_2)+\delta_1^2]\subseteq [0, 1])$), where $u_2$ now denotes the function $ f_{n_0}^1\circ f_{n_0-1}^1\circ \cdots \circ f_1^1 \circ u$. Now, we repeat the same process on $ [u_2(a_2)-1, u_2(a_2))$ as before, achieving a family of functions that are applied on $u_2(X)$. After that, the aforementioned bad gaps of 
 the first interval $I_1= [0,1]$ (where the biggest gap  $G_1=[r, u(a_1))$ was found)
 may increase %$\frac{1}{1-\sum\limits_{n\in \mathbb{N}} l_n}$
 $\frac{1}{1-\sum\limits_{n\in \mathbb{N}} \delta^1_n}\cdot \frac{1}{1-\sum\limits_{n\in \mathbb{N}} \delta^{i_2}_n}$ times, where now $\sum\limits_{n\in \mathbb{N}} \delta^{i_2}_n$ is the sum of the length of the bad gaps in  $I_{i_2}$ with respect to the first function $u$. %$ [u_2(a_2)-1, u_2(a_2))$.
  Hence, if the biggest discontinuity desired is $\epsilon_0$, we should choose the $\epsilon_1$ before smaller than  $\epsilon_0\cdot (1-\sum\limits_{n\in \mathbb{N}} \delta^1_n) \cdot(1-\sum\limits_{n\in \mathbb{N}} \delta^{i_2}_n)$. As a matter of fact, since each $k^{th}$ step may increase the length of the gaps of the intervals corresponding to the steps before $\frac{1}{1-\sum\limits_{n\in \mathbb{N}} \delta^{i_k}_n}$ times,  we should choose the $\epsilon_1$ before such that  $$\epsilon_1 <\epsilon_0\cdot \prod\limits_{k=1}^T (1-\sum\limits_{n\in \mathbb{N}} \delta^{i_k}_n).$$

Since the semiorder is bounded, the image of $X$ is contained in a bounded interval so, the process ends up after a finite number $T$ of steps, achieving a SS-representation $(u_{T+1}, 1)$ where the length of the biggest bad gap is less than the desired value $\epsilon_0>0$.
\end{proof}

%If we assume the absence of discontinuous Cantor sets, then the folowing alternative proof may be interesting:
Furthermore, we may apply infinite steps in the proof before, i.e., it is possible to construct a sequence of SS-representations that converges to a limit. The following theorem shows that this limit exists as well as it is in fact a continuous SS-representation.

\begin{theorem}{\rm(\textbf{{The Weak Theorem}})}

\noindent
Let $\prec$ be a SS-representable and bounded semiorder on a topological space $(X,\tau)$ and $(u,1)$ a SS-representation. If it satifies the necessary conditions (NC) and there is no discontinuous Cantor set contained in $u(X)$, then it is continuously representable.

\end{theorem}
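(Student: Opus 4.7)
The plan is to iterate the finite construction of Theorem~\ref{Tweakest} to infinitely many steps and show that the resulting sequence of SS-representations converges uniformly to a continuous SS-representation. First, I would enumerate all bad gaps of $u(X)$ across the finitely many unit sub-intervals $I_{-M+1},\ldots,I_N$ that cover the bounded image, in decreasing order of length $\delta_1\ge\delta_2\ge\cdots$; by Corollary~\ref{Cgap} this enumeration is well-defined, and by Proposition~\ref{Pgap} we have $\delta_n\to 0$. The no-discontinuous-Cantor-set hypothesis forces $\sum_n\delta_n^{i_k}<1$ in each sub-interval, and since the number of sub-intervals is finite this yields $\sum_n\delta_n<\infty$ globally.

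Next, define recursively $u_1:=u$ and $u_{n+1}:=f_n\circ u_n$, where $f_n$ is precisely the piecewise-linear map from the proof of Theorem~\ref{Tweakest} that closes the $n$-th largest bad gap (with its associated $\lambda_1^1,\lambda_2^1,\lambda_3^1$ and contraction pieces). By that theorem each $(u_n,1)$ is an SS-representation. The key convergence estimate is that $f_n$ displaces points by at most $\frac{\delta_n}{1-\delta_n}$ on each unit sub-interval it rescales, so accumulated across the finitely many sub-intervals we get $\|u_{n+1}-u_n\|_\infty\le C\delta_n$ for a constant $C$ independent of $n$ (once $\delta_n$ is small enough, which happens for all large $n$). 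Combined with $\sum_n\delta_n<\infty$, this shows $(u_n)$ is uniformly Cauchy on $X$ and converges uniformly to a function $u_\infty\colon X\to\mathbb{R}$. The continuity of $u_\infty$ then follows from Theorem~\ref{Tweakest}: were $u_\infty(X)$ to contain a gap of size $c>0$, then for any $m$ with $\|u_\infty-u_m\|_\infty<c/3$ the image $u_m(X)$ would also contain a gap of size at least $c/3$; but Theorem~\ref{Tweakest} guarantees that after finitely many iterations all gaps of $u_m(X)$ become strictly less than $c/3$, a contradiction.

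Finally, I would verify that $(u_\infty,1)$ remains an SS-representation. Because each $f_n$ satisfies $f_n(t+1)=f_n(t)+1$ and is strictly increasing on $u_n(X)$, every $(u_n,1)$ is an SS-representation, so uniform convergence immediately gives the weak inequalities $x\prec y\Rightarrow u_\infty(y)-u_\infty(x)\ge 1$ and $\neg(x\prec y)\Rightarrow u_\infty(y)-u_\infty(x)\le 1$. The main obstacle, which I expect to be the delicate step, is upgrading the first to a strict inequality: a priori, the slack $u_n(y)-u_n(x)-1>0$ could shrink to $0$ in the limit. I would resolve this by a careful inspection of the construction of $f_n$: at step $n$ only the single gap being closed is collapsed, and by the structure described in Corollary~\ref{Cexp} this gap cannot simultaneously absorb both $u_n(x)$ and a point forcing the threshold to collapse; combined with the continuity of $u_\infty$, irreducibility, and conditions $(c)$--$(d)$ of Lemma~\ref{lok}, any hypothetical equality $u_\infty(y)=u_\infty(x)+1$ for a pair with $x\prec y$ would contradict the necessary conditions (NC) by producing forbidden adjoint structure in $u_\infty(X)$ near that pair.
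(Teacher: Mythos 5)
Your overall route is the same as the paper's: iterate the construction of Theorem~\ref{Tweakest} indefinitely, show the resulting sequence of SS-representations $(u_n,1)$ is Cauchy for the sup norm, pass to the limit, and verify that the limit is continuous and still an SS-representation. On the convergence step you are in fact more careful than the paper: the paper only argues that $\|u_{n+1}-u_n\|_\infty\to 0$ because the slopes of the pieces of $f^n$ tend to $1$, which by itself does not yield a Cauchy sequence, whereas your summable bound $\|u_{n+1}-u_n\|_\infty\le C\delta_n$ together with $\sum_n\delta_n<\infty$ (which is where the no--discontinuous--Cantor--set hypothesis enters) does. The continuity of the limit is handled essentially identically in both arguments.

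The genuine gap is precisely at the step you yourself flag as delicate: upgrading $u_\infty(y)-u_\infty(x)\ge 1$ to a strict inequality when $x\prec y$. What you offer there is a plan, not an argument, and the mechanism you gesture at --- that the single gap collapsed at step $n$ cannot absorb both relevant points --- does not address the real danger, which is that the slack $u_n(y)-u_n(x)-1>0$ could be eroded \emph{cumulatively} over infinitely many steps by the contraction pieces $c^n$. The paper closes this with two specific ingredients absent from your sketch: first, a dichotomy --- either there is no $z$ with $x\prec^0 z\prec y$ or $x\prec z\prec^0 y$, in which case $X$ splits as $L_{\precsim^0}(x)\cup U_{\precsim^0}(y)$ with every element of the first part below every element of the second, so the semiorder is reducible and this case is excluded by the standing irreducibility assumption (Remark~\ref{Rirre}); or such a $z$ exists, giving $u_n(x)+1<u_n(z)+1<u_n(y)$ for every $n$, so that collapse of the threshold in the limit would force $\lim u_n(x)=\lim u_n(z)$, i.e.\ two trace-distinct points merging. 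Second, that merging is ruled out by the observation that, the semiorder being bounded, only finitely many of the contraction maps $c^n$ can act on any given element, so trace-distinct points cannot be contracted to a single value in the limit. Without these two ingredients (or an equivalent quantitative lower bound on the residual slack $u_n(y)-u_n(x)-1$ that is uniform in $n$), your proof is incomplete at its critical point.
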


\begin{proof}
%\medskip
%\noindent\textbf{Alternative proof:}
If it satifies the necessary conditions (NC), then it is $\epsilon$-continuously representable. Hence, for any $n\in \mathbb{N}$ there exists a SS-representation $(u_n,1)$ such that the length of the biggest gap is less than $\frac{1}{n}$. As a matter of a fact, given $\frac{1}{n_0}$, in the previus proof we used a method to construct a SS-representation $(u_{n_0}, 1)$ which is $\frac{1}{n_0}$-continuous. Hence, for  $\frac{1}{n_0+1},$ we proceed analogously but now starting the process from $u_{n_0}$.

In consecuence, we are able to construct a sequence of functions $\{u_n\}_{n\in \mathbb{N}}$ which are (respectively) $\frac{1}{n}$-continuous, for any $n\in \mathbb{N}$. Furthermore,  this is  a pointwise Cauchy sequence (with respect to the supremum norm). 
To see this, first notice that $u_{n+1}(x)=f^n u_n(x)$ so, $ ||u_{n+1}-u_n||_{\infty}=\sup\{u_{n+1}(x)-u_n(x)\}_{x\in X} $ is just
$$ ||f^n\circ u_{n}-u_n||_{\infty}=\sup\{f^n(u_{n}(x))-u_n(x)\}_{x\in X} =\sup\{f^n(r)-r\}_{r\in u_n(X)}.$$
Now, remember that $f^n$ is a piecewice function defined by the linear functions $\lambda_1^n, \, \lambda_2^n, \, \lambda_3^n$ and $c^n$ (see the proof of Theorem~\ref{Tweakest}). The slopes of those linear functions depends on the lengths of the gaps, so that they tend to 1  when $n$ tend to infinite. Thus, it is straigthforward to check that those linear functions tend to identity when $n$ increase. Therefore, it holds true that $\lim \limits_{n\to \infty}\sup\{f^n(r)-r\}_{r\in u_n(X)}=0$ and, we conclude that  $\{u_n\}_{n\in \mathbb{N}}$ is a pointwise Cauchy sequence of SS-representations.

Therefore, there exists the limit function $u=\lim \limits_{n\to \infty} u_n$ which  is in fact   a SS-re\-presen\-ta\-tion. To see that, first notice that for any function $f^n$ applied on $u_n(X)$, it holds that \mbox{$f^n(u_{n}(x)+1)=f^n(u(x))+1$}, for any $x\in X$ and in case $u_n(x)+1 \in {u_n(X)}$. Therefore --and since $f^n$ is strictly increasing on $u_n(X)$--, the condition of being a SS-representation is also satisfied by $f^n\circ u_n= u_{n+1}$.

So, we have that $x\prec y $ if and only if $u_n(x)+1<u_n(y)$, for any $n\in \mathbb{N}$, and then $u(x)+1\leq u(y)$. Let's see that the strict inequality is also keep for the limit function $u$.

First, remember  that, since the initial SS-representation also represents the main trace $\precsim^0$, it is also represented by any $u_n$. Now we distinguish two cases:

\begin{enumerate}
\item[$(a)$] If there is no element $z\in X$ such that $x\prec^0 z \prec y$ or $x\prec z\prec^0 y$, that is, if $(x,y)$ is a gap, since $x\prec y$ we may make a partition of $X$ by $X=X_1\cup X_2$ with $X_1=L_{\precsim^0}(x)$ and $X_2= U_{\precsim^0}(y)$ such that $x_1\prec x_2$ for any $x_1\in X_1$, $x_2\in X_2$. Thus, the semiorder is not irreducible and it is the union of two semiorders, so it is out of our study now since these cases are trivial when the irreducible components of the union are known (see Remark~\ref{Rirre}).

\item[$(b)$] Otherwise, there is --at least-- one element $z\in X$ such that $x\prec^0 z \prec y$ or $x\prec z\prec^0 y$. Suppose that $x\prec^0 z \prec y$, then $u_n(x)+1<u_n(z)+1<u_n(y)$ for any $u_n$ of the sequence, hence,  if $\lim\limits_{n\to \infty} u_n(x)+ 1= \lim\limits_{n\to \infty} u_n(y)$ it means that $\lim\limits_{n\to \infty}u_n(x)=\lim\limits_{n\to \infty}u_n(z)$. Thus, some points have been contracted to a single point. We argue similarly for $x\prec z\prec^0 y$. 
Therefore, these points must be affected by the contraction functions $c^n$. However, since the semiorder is bounded, the number of contraction functions that may be applied in the same element is finite. Hence, the equality  $\lim\limits_{n\to \infty} u_n(x)+ 1= \lim\limits_{n\to \infty} u_n(y)$ is impossible.
 \end{enumerate}

%in the limit the inequality $u_n(x)+1<u_n(y)$ may be broken just in case of a contraction to a point of a non-degenerate subinterval $I$, but in that case there is no more that one element of $X$ which image lies in $I$.
 Thus, we conclude that $u=\lim\limits_{n\to \infty} u_n$ is a SS-representation too.

 Finally, since the length  of the biggest gap of $u$ is less than $\frac{1}{n}$ for any $n\in \mathbb{N}$, we conclude that $u$ is continuous.
\end{proof}

Now, we present the \emph{Strong Theorem}, where Debreu's Open Gap Lemma is needed for the proof. Here, the absence of Cantor subsets is not required.
%As a matter of a fact, another proof is possible under the hypothesis of absence of discontinuous Cantor sets, as it is shown in a later remark.

%Now, we present the \emph{Strong Conjecture}.%, where Debreu's Open Gap Lemma should be needed for the proof. 
%As a matter of a fact, another proof is possible under the hypothesis of absence of discontinuous Cantor sets, as it is shown in a later remark.

\begin{theorem}{\rm(\textbf{The Strong Theorem})}\label{Tstrong}

\noindent
Let $\prec$ be a SS-representable and bounded semiorder on a topological space $(X,\tau)$. If it satifies the necessary conditions (NC), then it is continuously representable.
\end{theorem}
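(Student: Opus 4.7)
My plan is to reduce the Strong Theorem to the \emph{Weak Theorem} by using Debreu's Open Gap Lemma as a preprocessing step, producing out of $(u,1)$ a new SS-representation $(u',1)$ whose image contains no discontinuous Cantor set. Once this reduction is done, the conclusion follows immediately from the Weak Theorem. Thus the entire work lies in manufacturing $u'$ while preserving the threshold condition $x\prec y \iff u'(x)+1<u'(y)$.

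First, I would analyse the Cantor structure of $u(X)$. Since $\prec$ is bounded, $u(X)$ lies in a bounded interval, and I may partition $\mathbb{R}$ into the unit intervals $I_k=[k,k+1]$. By Proposition~\ref{Pgap} there are maximal bad gaps, and by Corollary~\ref{Cexp} (with the refinements $(e_1),(e_2)$ of (NC)) the presence of a discontinuous Cantor subset $C\subseteq u(X)\cap I_k$ forces the translates $C+n$, for those integers $n$ with $u(X)\cap I_{k+n}\ne \emptyset$, to be either congruent copies of $C$ or to contain at most one additional isolated point in specific adjacency positions. This \emph{vertical rigidity} across integer translates is exactly what allows a coordinated compression compatible with the threshold.

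Second, for each maximal family $\mathcal{F}=\{C+n: n\in A\}$ of such translated Cantor subsets, I would invoke Debreu's Open Gap Lemma on $C$ to obtain a strictly increasing $\phi\colon C\to \mathbb{R}$ whose image has only open gaps. I would then define
\begin{equation*}
u'(x)=\begin{cases} \phi(u(x)-n)+n, & \text{if } u(x)\in C+n \text{ for some } n\in A,\\ u(x), & \text{otherwise,}\end{cases}
\end{equation*}
with mild affine adjustments on the linking intervals between $\phi(C)+n$ and $\phi(C)+(n\pm 1)$, and an absorption of the possible extra isolated points (dictated by Corollary~\ref{Cexp}) into the endpoints of $\phi(C)$. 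Because $\phi$ is applied identically on every translate, the property $u(x)+1<u(y)\iff u'(x)+1<u'(y)$ is transferred across the transformation, so $(u',1)$ remains an SS-representation of $\prec$. Moreover, since $\phi(C)$ no longer has all gaps half-open (Debreu forces open gaps), the set $u'(X)$ contains no discontinuous Cantor subset: condition $(iii)$ of that definition fails on the image of each $C$. Iterating this construction over the (at most countably many) maximal Cantor families, and passing to the pointwise limit as in the Weak Theorem proof, I obtain an SS-representation $(u',1)$ that is Cantor-free.

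Third, applying the Weak Theorem to $(u',1)$ yields a continuous SS-representation of $\prec$, proving the statement. The main obstacle I anticipate is the second step: showing that the Debreu compression can be propagated simultaneously through all integer translates $C+n$ without disrupting either the order structure on the linking regions or the threshold condition. The delicate points are the boundary adjustments at the endpoints of each $\phi(C)+n$, where extra isolated points dictated by $(e_1),(e_2)$ must be absorbed so that no $\prec$-pair collapses into a non-strict inequality and no indifferent pair is pushed farther than $1$ apart. Using Lemma~\ref{Ladj} on adjoint nets together with the structural information provided by Corollary~\ref{Cexp}, this bookkeeping is finite at each step, which is what ultimately makes the construction succeed.
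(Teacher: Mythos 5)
Your proposal rests on the same key mechanism as the paper --- apply Debreu's Open Gap Lemma locally and propagate the compression through the integer translates using the rigidity supplied by Corollary~\ref{Cexp} --- but it organizes the argument differently: you reduce the Strong Theorem to the Weak Theorem by compressing only the discontinuous Cantor subsets, whereas the paper never invokes the Weak Theorem here. The paper applies Debreu's function to the \emph{entire} unit interval containing the current largest bad gap (removing all bad gaps there at once, Cantor or not), propagates by $g_0^{1}(x)=g_0(x-1)+1$ and $g_0^{-1}(x)=g_0(x+1)-1$ exactly as you do, performs the endpoint identifications dictated by its cases $(a)$--$(c)$, and terminates after exactly $T=M+N$ rounds, one per unit interval, with no limit passage. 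Your route buys reuse of the already-proved Weak Theorem for all non-Cantor bad gaps; the paper's route buys finiteness, since you may face countably many maximal Cantor families and hence an infinite iteration whose pointwise limit must be re-verified to be an SS-representation and Cantor-free before the Weak Theorem (with its own limit) is applied. Two points in your second step need repair. First, Corollary~\ref{Cexp} yields at most one stray point of $u(X)$ \emph{per translated gap} of $C$, not per translate of $C$; since $C$ has infinitely many gaps, each translate $C+n$ may carry countably many stray points, so the ``absorption into the endpoints of $\phi(C)$'' is an infinite bookkeeping task per translate --- precisely the content of the paper's identification cases, which your proposal asserts rather than executes. Second, those stray points lie outside $C+n$ and hence fall into your ``otherwise'' clause $u'(x)=u(x)$; leaving them fixed while compressing the surrounding copy $\phi(C)+n$ can violate the threshold condition against nearby points, so the ``mild affine adjustments'' are not a cosmetic afterthought but the crux of the proof. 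Neither issue is fatal --- the paper resolves the same difficulties --- but as written they constitute the distance between your plan and a complete argument.
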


\begin{proof}
%Now, we give a second proof through Debreu's lemma:
Let $(u,1)$ be a SS-representation of the semiorder. We assume, without loss of generality, that $u$ also represents the total preorder $\precsim_0$.
We shall use  function $g$ of Debreu's Open Gap Lemma, which removes the bad gaps of $S$ achieving another set $g(S)$ of the same length but now without bad gaps.%, we shall remove the bad gaps.

%Let $g$ be the functions of Debreu's Open Gap Lemma that remove the bad gaps, i.e. if $[a,b)$ or $(c,d]$) are bad gaps, then first they are identified with the points $\{b\}$ and $\{d\}$, respectively. That is, $g ( [a,b))=g(b)$ as well as  $g( (c,d])=g(d)$.
Since the semiorder is bounded, we may assume without loss of generality that $-M+w$ is the infimun of $u(X)$ and $N$ is the supremum, where $N,M\in \mathbb{N}$ and $w\in [0,1)$.
Let $\overline{u(X)}$ be the set defined by $[-M+w,N]\setminus \bigcup_{n\in \mathbb{N}}G_n$, where $\{G_n\}_{n\in \mathbb{N}}$ is the family of bad gaps of $u(X)$.

By Proposition~\ref{Pgap}, there is a maximal gap $G_1$ on $u(X)$ of the form $ (u(a_1), r]$ or $[r, u(a_1))$. Without loss of generality (and with the only purpose of simplify notation), we may suppose that $G_1=[r, u(a_1))$ with $u(a_1)=1$ and $r=u(a)-\delta_1$,  otherwise we would translate the set $u(X)$. 

\medskip

%\textbf{Step 1:}
First, we focus on $I_0=[u(a_1)-1, u(a_1)]=[0,1]$. Here, we apply function $g$ that removes the bad gaps of $\overline{u(X)}$ in $I_0$, returning a subset $S=g(\overline{u(X)}\cap [0,1])$ free of bad gaps and such that $\inf S=0$ and $\sup S=1$.
Notice that given that biggest bad gap  $G_1=[r, u(a_1))=[r,1)$, by Corollary~\ref{Cexp}, there is no bad gap containing 0. That is, there is no bad gap $(s,t]$ or $[s,t)$ with $s<0<t$. %AGIAN HAU OHAR MODUAN JARRI}.
 Thus, $\inf\{ \overline{u(X)}\cap [0,1]\}=0 $ and $g$ is defined from 0 to 1 (with some possible gaps in the middle in addition to $G_1$).
% then we consider $g_{|[0,s]}=Id$, so that $g$ removes now the bad gaps on $[s,1]$, keeping the length of the interval (i.e. $1-s$) and assings to $g(s)$ the value $s$\footnote{This is always possible just by means of traslations and isometries.}. Since we are working on $I_0$, we shall denote this removing action by function $g_0^0$.
The changes made on $I_0$ must be taking into account in $I_1=[1,2]$ and $I_{-1}=[-1,0]$ (if they exist) in order to keep the semiorder relation, so now we apply on 
%also apply the same function $\Pi_0$ but now on
 $I_1=[1,2]$ and $I_{-1}=[-1,0]$ the functions\footnote{This notation is deboted to help on the understanding of the meaning of the corresponding function. Hence, $g_0^1$ %$g_i^j$ ($i\neq j$)
  makes reference to the function $g_0^0$ (already defined on $I_0$) modified to be applied on $I_1$.} $g_0^1$ and $g_0^{-1}$.% which are defined as follows:

 Before define $g_0^1$ and $g_0^{-1}$, first notice that, by Corollary~\ref{Cexp}, if $[a,b)$ (dually $(a,b]$) is a bad gap in $I_0$, then   $[a-1,b-1)$ (respectively $(a+1,b+1]$) is a lacuna, so the same function $g_0^0$ can be applied in $I_1$ and $I_{-1}$, since there is no point $u(x)$ in the interval that could be removed, i.e. $g_0^1$ and $g_0^{-1}$ are strictly increasing on $u(X)\cap I_{1}$ and $u(X)\cap I_{-1}$, respectively. %However, if $[a-1,b-1)$ is contained in a gap $[a-1-\gamma_l, b-1+\gamma_r)$ (respectively $(a+1-\gamma_l,b+1+\gamma_r]$) for some $\gamma_l,\gamma_r\geq 0$ (at least one of them striclty positive) then function $g_0^{-1}$ (resp. $g_0^{1}$) proportionally identify the gap $[a-1-\gamma_l, b-1+\gamma_r)$ (respectively $(a+1-\gamma_l,b+1+\gamma_r]$) with $[a-1-\gamma_l,a-1+\gamma_r)$ (respectively $(a+1-\gamma_l,a+1+\gamma_r]$).
Thus, we are able to define $g_0^1(x)=g_0(x-1)+1$ (for any $x\in I_1=[1,2]$) and  $g_0^{-1}(x)=g_0(x+1)-1$ (for any $x\in I_{-1}=[-1,0]$).

  Reasoning analogously on $I_2$ and $I_{-2}$ with $g_0^{1}$ and $g_0^{-1}$, respectively, we define the functions $g_0^{1,2}$ and $g_0^{-1,-2}$ on $I_2$ and $I_{-2}$, and so on,  until arrive from the left to a $m\in \mathbb{N}$ such that  $ u(X)\cap [r-m-\gamma_l, u(a_1)-m+\gamma_r] $ (with $\gamma_l, \gamma_r \geq 0$ and, at least, one of them positive) has at most one point. 
%  
 %                      ZUZENDUA ZUZENDUA         $u(a_1)-m$. 
 Dually, from the right, until arrive   to   $n\in \mathbb{N}$ such that  $ u(X)\cap [r+n-\gamma_l', u(a_1)+n+\gamma_r'] $ (with $\gamma_l', \gamma_r' \geq 0$ and, at least, one of them positive) has at most one point. %
 %$u(a_1)+n$.
% 
  Here we apply the corresponding functions $g_0^{1,2,...,n}$ and $g_0^{-1,-2,...,-m}$. Before continue applying these functions in the successive intervals $I_{-m'}, I_n'$ (for $m'>m$ and $n'>n$), first we have to make some modifications on the set.

  At this point, first we focus on $ u(X)\cap [r-m'-\gamma_l, u(a_1)-m'+\gamma_r] $ (for any $m'>m$).
  Assume that there is a point $s\in  u(X)\cap [r-m-\gamma_l, u(a_1)-m+\gamma_r] $.    We distinguish three cases.
  
\begin{enumerate}
\item[$(a)$] If $\gamma_r=0$, then  $\gamma_l>0$ and    $[s-1, u(a_1)-n-1]$ may contain at most one point $s'$. In this case,  we identify $[r-m-1-\gamma_l, s-1]$ with $[r-m-1-\gamma_l', r-m-1]$. % (with $t=n'-n$ and $n'=n,n+1$). 
We shall continue arguing to the left (for $m'>m+1$) with $[s',u(a_1)-m'-1)$ as a gap of case (ci), in case $s'\in u(X)$ exists, and  just identifying $[r-m'-\gamma_l', u(a_1)-m']$ with $[r-m'-\gamma_l', r-m']$ otherwise.

\item[$(b)$] If $\gamma_l=0$, then  $\gamma_r>0$ and $[s-1, u(a_1)-m-1]$ may contain at most one point $s'$. %$[r+n, s+1]$ may contain one point $s'$% such that $s_t\leq s_{t-1}+1$. 
In this case  we identify $[s', u(a_1)-m-1+\gamma_r']$ with $[u(a_1)-m-1 , u(a_1)-m-1+\gamma_r']$.
 We shall continue arguing to the left with $[r-m-1,s')$ as a gap of case (ci), in case $s'$ exists, and   just identifying 
 $[r+m', u(a_1)+m'+\gamma_r']$ with $[u(a_1)+m' , u(a_1)+m'+\gamma_r']$ otherwise.

\item[$(c)$] Otherwise, $\gamma_r>0$ and $\gamma_l>0$ 
and we apply both identifications described in $(a)$ and $(b)$.
%we identify $[r+n', u(a_1)+n'+\gamma_r]$ with $[ u(a_1)+n', u(a_1)+n'+\gamma_r']$.
 That is, we identify $[r-m'-1-\gamma_l, s-1]$ with $[r-m'-1-\gamma_l', r-m'-1]$ and $[s', u(a_1)-m'-1+\gamma_r']$ with $[u(a_1)-m'-1 , u(a_1)-m'-1+\gamma_r']$.
%\item[$(a)$] If $\gamma_l>0$, then we identify $[r-m'-\gamma_l, u(a_1)-m']$ with $[r-m'-\gamma_l, r-m']$.
%\item[$(b)$] If $\gamma_l=0$ and there is no point $u(a)-m$, by Corollary~\ref{Cexp}, $[r-m' , u(a_1)-m']\cap u(X) $ may be nonempty   (for any $m'>m$),  and we identify $[r-m' , u(a_1)-m']$ with $[ u(a_1)-m', u(a_1)-m'+\gamma_r]$.
%%%%$\gamma_r>0$ and  we identify $[r-m'-\gamma_l, u(a_1)-m'+\gamma_r]$ with $[ u(a_1)-m', u(a_1)-m'+%%\gamma_r]$.
%%%%\item[$(c)$] Otherwise, there is no identification needed. %, by Corollary~\ref{Cexp},
\end{enumerate}

If there is no    point $s$ in  $ u(X)\cap [r-m-\gamma_l, u(a_1)-m+\gamma_r] $, then 
we identify $[r-m' -\gamma_l, u(a_1)-m' )$ with $[r-m'-\gamma_l, r-m']$ in case $\gamma_l>0$,  otherwise 
we identify $[r-m'  , u(a_1)-m'+\gamma_r )$
 with $[u(a_1)-m',u(a_1)-m'+\gamma_r'] $, for any $m'>m$.

  Now, we focus on the right side  $ u(X)\cap [r+n'-\gamma_l', u(a_1)+n'+\gamma_r'] $ (for any $n'>n$). Assume that there exists that  point $s$ in  $ u(X)\cap [r+n-\gamma_l', u(a_1)+n+\gamma_r'] $.   We distinguish three cases.

\begin{enumerate}
\item[$(a)$] If $\gamma_r'=0$, then  $\gamma_l'>0$ and    $[s+1, u(a_1)+n+1]$ must be empty. In this case,  we identify $[r+n+1-\gamma_l', s+1]$ with $[r+n+1-\gamma_l', r+n+1]$. % (with $t=n'-n$ and $n'=n,n+1$). 
We shall continue arguing to the right (for $n'>n+1$) with $[r+n+1,s+2)$ as a gap of case (ci), in case $s+2\in u(X)$ exists, and  just identifying $[r+n'-\gamma_l', u(a_1)+n']$ with $[r+n'-\gamma_l', r+n']$ otherwise.

\item[$(b)$] If $\gamma_l'=0$, then  $\gamma_r'>0$ and $[r+n, s+1]$ may contain one point $s'$% such that $s_t\leq s_{t-1}+1$. 
In this case  we identify $[s+1, u(a_1)+n+1+\gamma_r']$ with $[u(a_1)+n+1 , u(a_1)+n'+\gamma_r']$.
 We shall continue arguing to the right with $[r+n+1,s')$ as a gap of case (ci), in case $s'$ exists, and   just identifying 
 $[r+n', u(a_1)+n'+\gamma_r']$ with $[u(a_1)+n' , u(a_1)+n'+\gamma_r']$ otherwise.

\item[$(c)$] Otherwise, $\gamma_r'>0$ and $\gamma_l'>0$ 
and we apply both identifications described in $(a)$ and $(b)$.
%we identify $[r+n', u(a_1)+n'+\gamma_r]$ with $[ u(a_1)+n', u(a_1)+n'+\gamma_r']$.
 That is, we identify $[r+n'-\gamma_l', s+n'-n]$ with $[r+n'-\gamma_l', r+n']$ and $[s+n'-n, u(a_1)+n'+\gamma_r']$ with $[u(a_1)+n' , u(a_1)+n'+\gamma_r']$, with $n>n'$.
\end{enumerate}

If there is no    point $s$ in  $ u(X)\cap [r+n-\gamma_l', u(a_1)+n+\gamma_r'] $, then 
we identify $[r+n' -\gamma_l', u(a_1)+n' )$ with $[r+n'-\gamma_l', r+n']$ in case $\gamma_l'>0$,  otherwise 
we identify $[r+n'  , u(a_1)+n'+\gamma_r' )$
 with $[u(a_1)+n',u(a_1)+n+\gamma_r'] $, for any $n'>n$.

  Now, we are able to  apply successfully the corresponding functions $g_0^{1,2,...,n}$ and $g_0^{-1,-2,...,-m}$ on  the successive intervals $I_{-m'}, I_n'$ (for $m'>m$ and $n'>n$),
  until arrive to the last intervals   $[-M, -M+1]$ and $[N-1,N]$. Hence, we have applied on $u(X)$ the piecewise function $g_0$ defined as $g_0(x)=g_0^{0}(x)$ if $x\in I_0$, $g_0(x)=g_0^{1,...,k}(x)$ if $x\in I_k$ (for any $k=1,...,n$), and $g_0(x)=g_0^{-1,...,-k}(x)$ if $x\in I_k$ (for any $k= -1,...,-m$).  We denote now by $u_0$ the function $g_0\circ u$.

 \medskip

%\textbf{Step 2:}
 By $g_0^0$ all the bad gaps on $[0,1]$ have been removed and then, through functions $g_0^1, g_0^{1,2}$, ..., $g_0^{1,2,...,n}$ and $g_0^{-1}, g_0^{-1,-2},..., g_0^{-1,-2,...,-m}$ the changes made in $[0,1]$ have been reproduced in $I_1, I_2, ..., I_n$ and $I_{-1}, ...,I_{-m}$ in order to keep the semiorder relation. %After that, notice too that the constant threshold has been reduced to $1-\mu_0  $, where $\mu_0$ is the sum of the meassures of all the bad gaps in $I_0$. Hence, making the product with the constant $1-\mu_0  $ we recover a SS-representation $u_0$ with threshold $1$ but now without bad gaps in the interval $[0,1)$.   
 \medskip

%\textbf{Step 3:}
 Nevertheless, we have not removed possible bad gaps in $g_0(\overline{u(X)})\setminus [0,1]$, thus, the preocess continue %(analogously to step 1)
  but now focusing on the biggest bad gap $G_1^0$ of $g_0(\overline{u(X)})$\footnote{Here the superscript 0 of $G_1^0$ refers to  function $u_0$.}. 

Again, without loss of generality, we may suppose that $G_1^0=[r, u_0(a_1))$  with $u_0(a_1)=1$ and $r=u_0(a_1)-\delta_1^0$ (it would be proved dually for $ (u_0(a_1), r]$),  otherwise we should translate the set by a function $t_1(x)=x+1-u_0(a_1)$. 
Hence, we would define again function $g_1^0$ on $G_1^0=[u_0(a_1)-1, u_0(a_1))$ and then   the corresponding functions  $g_1^{ 1}, g_1^{1,2}$,..., $g_1^{1,2,..., N_1}$ and $g_1^{-1}, g_1^{-1,-2}$,..., $g_1^{-1,-2,...,-M_1}$ (as we did before), constructing the piecewise function $g_1$.
 
Since the amount of intervals is finite, the process ends up after applying a last piecewise function $g_T$ (where $T=M+N$) defined by means of a family of  functions  $g_T^{ 1}, g_T^{1,2}$,..., $g_T^{1,2,..., N_T}$ and $g_T^{-1}, g_T^{-1,-2}$,..., $g_T^{-1,-2,...,-M_T}$, achieving a continuous SS-representation.
\end{proof}

%\begin{remark}
Under the assumption of the \emph{Axiom of choice}, Theorem~\ref{Tstrong} may be generalized to unbounded semiorders as follows.
%\end{remark}

\begin{corollary}%{\rm(\textbf{The Strong Theorem})}\label{Tstrong}
%\noindent
Let $\prec$ be a SS-representable  semiorder on a topological space $(X,\tau)$. If it satifies the necessary conditions (NC), then it is continuously representable.
\end{corollary}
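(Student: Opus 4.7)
The plan is to reduce to the bounded case of Theorem~\ref{Tstrong} by exhausting $X$ via an ascending chain of bounded sub-semiorders, and then glue the resulting continuous representations using the Axiom of Choice. Starting from the given SS-representation $(u,1)$ of $(X,\prec)$, for each $n \in \mathbb{N}$ I set $Y_n = u^{-1}([-n,n])$, so that $Y_n \subseteq Y_{n+1}$ and $X = \bigcup_{n\in\mathbb{N}} Y_n$. Each $(Y_n,\prec|_{Y_n})$ is a bounded SS-representable semiorder (the restriction of $u$ serves as an SS-representation with bounded image, precluding infinite strict chains), and it inherits the necessary conditions (NC) from $(X,\prec)$, since these conditions are topological and order-theoretic and restrict to the subspace structure. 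By Theorem~\ref{Tstrong}, each $(Y_n,\prec|_{Y_n})$ therefore admits a continuous SS-representation.

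The core of the argument is to select these representations inductively and compatibly, so that $v_{n+1}|_{Y_n}=v_n$. Given a continuous SS-representation $v_n$ of $Y_n$, the set $Y_{n+1}\setminus Y_n$ decomposes into two bounded shells, $u^{-1}([-n-1,-n))$ and $u^{-1}((n,n+1])$, each adjoining $Y_n$ on exactly one side in the representation $u$. I would extend $v_n$ to $Y_{n+1}$ by first translating the restriction of $u$ on each shell so that it matches $v_n$ at the interface, and then applying the Debreu-type gap-removal procedure from the proof of Theorem~\ref{Tstrong} only on each shell; this should yield a continuous SS-representation $v_{n+1}$ of $Y_{n+1}$ that agrees with $v_n$ on $Y_n$. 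Invoking AC to make the choice at each step, the pointwise limit $v=\bigcup_{n\in\mathbb{N}} v_n$ is then a continuous SS-representation of $(X,\prec)$.

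The main obstacle will be the inductive extension step: verifying that the gap-removal procedure of Theorem~\ref{Tstrong}, restricted to a single bounded shell, produces a continuous SS-representation compatible with the pre-existing $v_n$ on the adjacent region $Y_n$. Concretely, one must check that the piecewise corrections $g_i$ in the construction of Theorem~\ref{Tstrong} can be truncated to act only within a shell (so they do not disturb $v_n$ on $Y_n$) while still producing an SS-representation on $Y_{n+1}$; this requires choosing the translation on each shell so that the cross-shell relations $x\prec y$ (with $x\in Y_n$ and $y$ in a shell) remain correctly encoded by $v_{n+1}$. Once this relative version of the theorem is established, AC furnishes the simultaneous choice of compatible extensions for all $n$, and the desired global continuous SS-representation $v=\bigcup_n v_n$ follows.
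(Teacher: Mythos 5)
The paper states this corollary without any proof at all---it is offered as a remark that Theorem~\ref{Tstrong} ``may be generalized to unbounded semiorders'' under the Axiom of Choice---so there is no argument of the author's to compare yours against. Judged on its own, your proposal contains a genuine gap, and it is precisely the step you yourself flag as ``the main obstacle'': the claim that the gap-removal procedure of Theorem~\ref{Tstrong} can be truncated to act only on a shell $u^{-1}((n,n+1])$ without disturbing $v_n$ on $Y_n$. The whole point of the construction in Theorem~\ref{Tstrong} (and of Corollary~\ref{Cexp}) is that the constant threshold makes the representation rigid across unit intervals: a bad gap $[r,b)$ in the shell forces constraints on $u(X)\cap[r-k,b-k]$ for $k=1,2,\dots$ up to some level $m$ that depends on where the first ``bigger gap'' appears, and the expansion/contraction functions $\lambda_i$, $c$ must be applied on all of those intervals simultaneously to preserve $x\prec y\iff g(x)+1<g(y)$. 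Those intervals lie inside $Y_n$, so removing a bad gap in the new shell generically \emph{requires} modifying $v_n$ on $Y_n$; there is no finite buffer after which the propagation stops, because $m$ is not uniformly bounded over the gaps appearing in successive shells. For the same reason, merely ``translating $u$ on the shell to match $v_n$ at the interface'' cannot produce an SS-representation of $Y_{n+1}$: a point $y$ in the shell with $u(y)\le u(x)+1$ for some $x\in Y_n$ must satisfy $v_{n+1}(y)\le v_{n+1}(x)+1$, and the admissible positions for $y$ are dictated by the nonuniform distortion that the gap-removal already inflicted on $[n-1,n]$, not by a translate of $u$.

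There is also a secondary issue you do not address: even granting compatible extensions $v_{n+1}|_{Y_n}=v_n$, continuity of the union $v=\bigcup_n v_n$ on $X$ does not follow from continuity of each $v_n$ on the subspace $Y_n=u^{-1}([-n,n])$, because $u$ is not assumed continuous and a convergent net in $X$ need not be eventually contained in any single $Y_n$. One can likely repair this using condition (a) of (NC) (the $\tau$-continuity of $\prec$ forces nets converging to a non-extremal point to be eventually $u$-bounded), but it needs an argument. The overall strategy---exhaust $X$ by bounded pieces and take a limit---is a reasonable one and is presumably close to what the author intends by invoking AC, but as written the inductive extension step is asserted rather than proved, and it is exactly the step where the difficulty of the unbounded case lives.
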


Then, we could conclude the following result, which is directly deduced from Theorem~\ref{Tstrong} and that %may be interpreted as a \emph{Debreu's Open Gap Lemma for semiorders}.
 we present as a \emph{Debreu's Open Gap Lemma for Bounded Semiorders}.

\begin{corollary}{\rm(\emph{\textbf{Debreu's Open Gap Lemma for Bounded Semiorders}})} 

\noindent Let $S$ be a bounded subset of $\mathbb{R}$. Then, there exists a strictly increasing function $g\colon S\to \mathbb{R}$ such that all the gaps of $g(S)$ are open or closed, and satisfying that  $x+1<y\iff g(x)+1<g(y)$ if and only if the following conditions hold:
\begin{enumerate}
\item[$(i)$] There are no open-closed or closed-open gaps which length is bigger than or equal to 1. 
\item[$(ii)$] For any gap $[a,b)$: 
      \begin{enumerate}
      \item[(a)] $[a+n, b+n]\cap S$ %are gaps that 
      may content one point $s_n$ such that $s_{n+1}\leq s_n+1$, for any $n\in\mathbb{N}$ with $n<m_r$, for some $m_r\in\mathbb{N}$ such that    there exist $\gamma_l, \gamma_r\geq 0$ (with at least one of them different from 0) satisfying     that  $S\cap [a+m_r-\gamma_l, b+m_r+\gamma_r]$ %is a gap with the exception of 
     contains at most one point $s$. %, for some $\gamma_l, \gamma_2\geq 0$ with at least one of them different from 0.
      If there exists that point $s$ and $ \gamma_r>0$, then  $ (s+n,b+n]$ may be nonempty, for any $n>m_r$ and, if $ \gamma_l>0$,  then $ [a+n, s+n]$ may contain more than one point.
      \item[(b)]  $[a-n, b-n]$ or $[a-n,b-n)$ are gaps, % that may content the point $b-n$,
       for any $n\in\mathbb{N}$ with $n<m_l$, for some $m_l\in\mathbb{N}$ such that 
there exist  $\gamma_l, \gamma_r\geq 0$ (with at least one of them different from 0) satisfying that       
       $S\cap [a-m_l-\gamma_l, b-m_l+\gamma_r]$ %if a gap with the possible exception of 
       contains at most one point $s_{m_l}$, and such that $s_{m_l}+1\geq s_{m_l-1}$ (in case $s_{m_l-1}\in S\cap [a-m_l+1-\gamma_l, b-m_l+1+\gamma_r]$).
       % for some $\gamma_l, \gamma_2\geq 0$ with at least one of them different from 0.
      \end{enumerate}

\item[$(ii)$] For any gap $(a,b]$: 
      \begin{enumerate}
      \item[(a)] $S\cap [a-n,b-n]$ contains at most one point $s_n$ such that $s_{n}+1\leq s_{n+1}$, for any $n\in\mathbb{N}$ with $n<m_l$, for some $m_l\in\mathbb{N}$ such that there exist $\gamma_l, \gamma_r\geq 0$ (with at least one of them different from 0) satisfying that      
      $S\cap [a-m_l-\gamma_l, b-m_l+\gamma_r]$ may contain at most one point $s$. %is a gap with the exception of a point $s$, for some $\gamma_l, \gamma_2\geq 0$ with at least one of them different from 0. 
      If there exists that point $s$ and $ \gamma_r>0$, then  $ [s-n,b-n]$ may contain more than  one point, for any $n>m_l$ and, if $ \gamma_l>0$, then  $ [a-n, s-n)$ may  be nonempty.
      \item[(b)]  $[a+n, b+n]$ or $[a+n,b+n)$ are gaps, for any $n\in\mathbb{N}$ with $n<m_r$, for some $m_r\in\mathbb{N}$ such that there exist $\gamma_l, \gamma_r\geq 0$ (with at least one of them different from 0) satisfying that     $S\cap [a+m_r-\gamma_l, b+m_r+\gamma_r]$ contains at 
contains at most one point $s_{m_r}$, and such that $s_{m_r-1}+1\geq s_{m_r}$ (in case $s_{m_r-1}\in S\cap [a+m_r-1-\gamma_l, b+m_r-1+\gamma_r]$).

    %  most the point $a+m_r-\gamma_l$.% for some $\gamma_l, \gamma_2\geq 0$ with at least one of them different from 0.
      \end{enumerate}
\end{enumerate}
\end{corollary}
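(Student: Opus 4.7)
The plan is to transfer the corollary into the semiorder setting and then invoke Theorem~\ref{Tstrong}. Given a bounded subset $S\subseteq\mathbb{R}$, define the semiorder on $S$ by $x\prec y\iff x+1<y$, endowed with the subspace topology inherited from $\mathbb{R}$. Then the identity map $\mathrm{id}_S\colon S\to\mathbb{R}$ is already a (not necessarily continuous) SS-representation with threshold $1$. The desired function $g$ is, by definition, precisely a continuous SS-representation of $(S,\prec)$ such that $g(S)$ has only open or closed gaps. Thus the corollary is the translation of Theorem~\ref{Tstrong} to the concrete case where the semiorder is a subset of the real line with its canonical SS-structure.

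For the sufficiency direction ($\Leftarrow$), I would verify that conditions $(i)$ and $(ii)$ are exactly the concrete manifestation of the necessary conditions (NC) for $(\mathrm{id}_S,1)$. Condition $(i)$ reflects Proposition~\ref{perp}: an open-closed or closed-open gap of length $\geq 1$ would prevent the main-trace preorder $\precsim^0$ from being $\tau$-continuous, so no continuous SS-representation could exist. Conditions $(ii)(a)$, $(ii)(b)$, and their duals are exactly the structural restrictions on the image of a continuous SS-representation spelled out in Corollary~\ref{Cexp}: the cases $(a_1)$, $(b_1)$, $(a_2)$, $(b_2)$ describe, for each bad gap, what must happen in the translates $[r+n,u(a)+n]$ and $[r-n,u(a)-n]$ for $n\in\mathbb{N}$, with the constants $\gamma_l,\gamma_r$ marking the threshold beyond which restrictions cease. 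Once $(i)$ and $(ii)$ are identified with (NC) for the semiorder $(S,\prec)$, Theorem~\ref{Tstrong} applies directly to produce a continuous SS-representation $g\colon S\to\mathbb{R}$. Because every jump-discontinuity has length $0$ (continuity), every gap of $g(S)$ is automatically open or closed.

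For the necessity direction ($\Rightarrow$), suppose such a $g$ exists. Then $g$ is a continuous SS-representation of $(S,\prec)$, so by Proposition~\ref{perp} and Lemma~\ref{lok} the necessary conditions (NC) hold, and the structural Corollary~\ref{Cexp} applies to $g(S)$. Since $g$ is strictly increasing with $x+1<y\iff g(x)+1<g(y)$, bad gaps, translates of gaps by integer thresholds, and singleton obstructions in $S$ correspond bijectively to the analogous objects in $g(S)$. Reading Corollary~\ref{Cexp} back through $g^{-1}$ on $S$ yields precisely the gap-by-gap restrictions stated as $(ii)(a)$ and $(ii)(b)$ for each of the two orientations $[a,b)$ and $(a,b]$; condition $(i)$ follows because a half-open gap of length $\geq 1$ would force a point $x\in S$ with no neighbour $y\in S$ achieving $x+1<y$ or $y+1<x$ on the required side, contradicting the compatibility of $g$ with the threshold.

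The main obstacle is purely bookkeeping: the statement lists many subcases (four in $(ii)(a)$/$(ii)(b)$, left-right symmetric, with parameters $\gamma_l,\gamma_r,m_l,m_r$), and one must check that the case analysis in Corollary~\ref{Cexp} exhausts exactly these, with the single-point constraints $s_{n+1}\leq s_n+1$ matching the adjoint-net conditions $(e_1)$, $(e_2)$ of Lemma~\ref{lok}. No new idea is required beyond Theorem~\ref{Tstrong}, but the translation must be done carefully to confirm that the abstract (NC) and the concrete conditions $(i)$, $(ii)$ are logically equivalent for the canonical semiorder on $S$.
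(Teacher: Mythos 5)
Your proposal is correct and follows essentially the same route as the paper: the paper presents this corollary as ``directly deduced from Theorem~\ref{Tstrong}'', and your argument is precisely that deduction --- equip $S$ with the canonical semiorder $x\prec y\iff x+1<y$, observe that conditions $(i)$--$(ii)$ are the concrete reading of the necessary conditions (NC) via Corollary~\ref{Cexp} for the identity representation, and invoke the Strong Theorem for sufficiency and Proposition~\ref{perp}, Lemma~\ref{lok} and Corollary~\ref{Cexp} for necessity. The bookkeeping you flag (matching the $\gamma_l,\gamma_r,m_l,m_r$ subcases to the case analysis of Corollary~\ref{Cexp}) is exactly what the paper leaves implicit, so your write-up is if anything more explicit than the original.
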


\section{A constructive weak version  of Debreu's Open Gap Lemma}\label{sDeb}

Finally, in the present work  we also achieve as a subproduct of the section above a new proof of a \emph{weak} version of \emph{Debreu's Open Gap Lemma.}
 %but now, starting from a subset $S$ of the real line, avoiding the context of a semiorder.  In the title, 
 We referred to it as  \emph{weak} because the absence of discontinuous Cantor set is assumed.% and, the second one is given for bounded sets.
 
Althougth the notion of $\epsilon$-continuity was defined for semiorders, it may be generalized to other kind of representations of orderings under some adequate hypothesis. For example, in the case of total preorders or interval orders, assuming -without lost of generality in the case of total preorders- that the infimum and supremum of the representation are 0 and 1 -or any other values-, respectively. Hence, the length of the biggest jump-discontinuity may be compared with the diameter (or any other invariant) of the image of the representation (i.e.with 1). 

Taking into account the notion of  $\epsilon$-continuity for total preoders, this weak version has a big remarkable benefit: the method is constructive and finite.
%As a consequence of the proof of the main theorem, the following corollaries arise:

\begin{corollary}
Let $S$ be a subset of the extended real line $\overline{\mathbb{R}}$. Assume that there is no discontinuous Cantor set contained in $S$. Then, there exists a strictly increasing %piecewise (not necessarily with a finite number of pieces) linear 
 function $g\colon S\to \mathbb{R}$ such that all the gaps of $g(S)$ are open or closed.

Furthermore, this function $g$ may be built as a composition of a family (may be infinite) of linear 2-piecewise functions. As a matter of a fact, for any $\epsilon>0$, this function $g$ can be constructed (in a finite number of steps) by a composition of a finite family of linear 2-piecewise\footnote{We understand by 2-piecewise function a  function which is defined by two pieces.} functions, satisfying that the length of the bad gaps of $g(S)$ is less than $\epsilon.$ 
\end{corollary}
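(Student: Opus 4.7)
The plan is to mimic the constructive approach from the proof of the Weak Theorem, but in the much simpler setting where the threshold constraint $x+1<y\iff g(x)+1<g(y)$ is absent, so we only need \emph{local} 2-piecewise linear pieces, without the multi-interval synchronization through the functions $\lambda_i^k$ and $c^k$ used in the proof of Theorem~\ref{Tstrong}.

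First I would reduce to the bounded case. Choose any strictly increasing homeomorphism $h\colon\overline{\mathbb{R}}\to[0,1]$ (for instance a rescaled $\arctan$) and work with $h(S)\subseteq[0,1]$. The order type and the type of each gap (bad / open / closed) are preserved, and so is the hypothesis of absence of a discontinuous Cantor subset. Once a strictly increasing $g_0\colon h(S)\to\mathbb{R}$ whose image has only open or closed gaps has been built as a composition of 2-piecewise linear maps, the desired $g$ for $S$ is $g_0\circ h$. From now on, assume $S\subseteq[m,M]$.

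Under the absence of a discontinuous Cantor subset the total length of the bad gaps of $S$ is strictly less than $M-m$, so the argument of Proposition~\ref{Pgap} still produces a \emph{maximal} bad gap $G_1$, of length $\delta_1$. Assume $G_1=[r,a)$, the case $(r,a]$ being dual. I would then define a 2-piecewise linear $f_1\colon[m,M]\to[m,M]$ by stretching $[m,r]$ linearly onto $[m,r+\alpha]$ and $[a,M]$ linearly onto $[r+\alpha,M]$, for some $\alpha\in(0,\delta_1)$ chosen so that the image of $G_1$ is either collapsed to a single point (which yields a closed gap) or shortened to an open gap. Since $f_1$ is strictly increasing and agrees with an affine map on each piece, it is a strictly increasing 2-piecewise linear function, and $f_1(S)$ has one fewer bad gap than $S$. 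Iterating on the successive maximal bad gaps yields compositions $g_n:=f_n\circ\cdots\circ f_1$; because the sum of the bad-gap lengths is finite, the lengths of the successive maximal gaps tend to $0$, so for any prescribed $\epsilon>0$ a \emph{finite} number of steps suffices to bring every remaining bad gap below $\epsilon$, which proves the second, finitary assertion of the corollary.

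For the first assertion I would argue as in the Weak Theorem: the slopes of $f_n$ tend to $1$ and $f_n$ differs from the identity only on an interval of length $\delta_n\to0$, so $\{g_n\}$ is Cauchy in the supremum norm and converges uniformly to a strictly increasing $g\colon S\to\mathbb{R}$ with only open or closed gaps. The main obstacle I expect is verifying that $g$ remains strictly increasing in the limit, i.e.\ that no two distinct points of $S$ get identified. This is substantially easier here than in the semiorder setting, where the contraction functions $c^k$ can act on intervals actually containing points of $u(X)$: here each $f_n$ contracts only the bad-gap interval itself, which by definition contains no point of $S$ on its interior, while all genuine points of $S$ lie in the linearly stretched portions. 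Hence, for any $x<y$ in $S$, the quantitative inequality $g_n(y)-g_n(x)\geq c(x,y)>0$ is preserved along the sequence and passes to the limit.
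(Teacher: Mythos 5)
Your construction follows the paper's own proof almost step for step: normalize to $[0,1]$ via a homeomorphism of $\overline{\mathbb{R}}$, collapse the current maximal bad gap by a strictly increasing $2$-piecewise linear map, iterate, and pass to the uniform limit. The one step whose justification would not survive scrutiny is the strict monotonicity of the limit $g$. The observation that each $f_n$ contracts only the gap interval itself, which contains no point of $S$, guarantees that every \emph{finite} composition $g_n$ is injective on $S$, but it does not by itself produce the positive lower bound $c(x,y)$ you invoke: if $S\cap[x,y]$ were a discontinuous Cantor set, every $f_n$ would still contract only point-free intervals and every $g_n$ would still be strictly increasing, yet $g_n(y)-g_n(x)\to 0$. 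The bound $c(x,y)>0$ exists precisely because the no-discontinuous-Cantor-set hypothesis, applied to $S\cap[x,y]$, forces the bad gaps lying between $x$ and $y$ to satisfy $\sum_k\delta_k<y-x$; the paper makes this explicit through the induction $d(g_n(x),g_n(y))=\frac{1}{1-\sum_{k\le n}\delta_k}\bigl(d(x,y)-\sum_{k\le n}\delta_k\bigr)$, whose limit is positive exactly under that hypothesis. So the ingredient is not merely ``gaps contain no points of $S$'' but a quantitative, pairwise use of the Cantor hypothesis.

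The same remark applies to your claim that the successive maximal gap lengths tend to $0$ ``because the sum of the bad-gap lengths is finite'': finiteness of $\sum\delta_k$ only gives $\delta_n\to 0$, while the $n$-th gap has meanwhile been stretched by the cumulative factor $\bigl(1-\sum_{k<n}\delta_k\bigr)^{-1}$; it is the strict inequality $\sum\delta_k<1$ (again the Cantor hypothesis, which you do state earlier) that keeps this factor bounded and yields the paper's formula $l_n=\delta_n/(1-\sum_{k<n}\delta_k)\to 0$. A minor point of wording: with your two pieces mapped onto $[m,r+\alpha]$ and $[r+\alpha,M]$, the bad gap is always collapsed to the single point $r+\alpha$ and simply ceases to be a gap, so the ``closed gap versus shortened open gap'' dichotomy you describe does not actually arise.
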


\begin{proof}
First, since $\overline{\mathbb{R}}$ is homeomorphic to $[0,1]$ (see \cite{BR}),  without lost of generality we may assume that $S\subseteq [0,1]$, with $\sup S=1$ and $\inf S=0$. %, for some $r\in \mathbb{R}$.

By Proposition~\ref{Pgap}, there is a maximal open-closed or closed-open gap (we shall refer to them as \emph{bad gaps}): $G_1$. 
Let $\{\delta_n\}_{n\in \mathbb{N}}$ be the sequence of lengths corresponding to  the bad gaps $\{G_n\}_{n\in \mathbb{N}}$ (ordered  from biggest to smaller, see Corollary~\ref{Cgap}).

Now, focusing  on  $G_1$, we will construct a   function $f_1$ on $S$ that will remove this gap.  We denote $G_1$ by $G_1=[a_1,b_1) $ or $G_1=(a_1,b_1)$. % In order to simplify the expressions and, without lost of generality, we may assume  that $G_1=[a_1,b_1)=[a_1,1)$ or $G_1=(a_1,b_1)=(a_1,1]$ (so, $a_1=1-\delta_1$). Otherwise we should translate the set by a function $t_1(x)=x+1-b_1$, keeping the initial unitary length of S. 
We define the following strictly increasing function: % on the left side of the gap (i.e. for $x\leq a$):

\begin{center}
$f_1(x) =  \left\{  \begin{array}{lcl}
x\cdot \frac{1}{1-\delta_1}&;& x\leq a_1, \\
(x-\delta_1)\cdot \frac{1}{1-\delta_1}&;& x\geq b_1, \\
\end{array}\right.\medskip$
 \end{center}

Notice that $f_1$ keeps the length of $S$, i.e. $\sup f_1(S)- \inf f_1(S)=1$. We repeat the process with the next gap, $f_1(G_2), $ (we may denote   $f_1(G_2)$  by $[a_2,b_2)$ or $(a_2,b_2]$. %is of the form $[a_2,1)$ or $(a_2,1]$, otherwise we should translate  it t by a function $t_2(x)=x+1-b_2$) and continue througthout the sequence of gaps $\{G_n\}_{n\in \mathbb{N}}$.

Therefore, we may define the sequence $\{l_n\}_{n\in \mathbb{N}}$ of   lengths associated to the biggest bad  gap  after apply each functions $f_1,...,f_n$ on $S$, respectively, i.e. $l_{n+1}$ denotes the length of the biggest gap of $f_n\circ \cdots \circ f_1 (S)$ which corresponds to $f_n\circ \cdots \circ f_1 (G_{n+1})$.  This sequence is defined recursively as follows:

%Therefore, we are able to define the sequence $\{l_n\}_{n\in \mathbb{N}}$ of   lengths associated to the biggest bad  gap of  each representation $(u_1^n,1)$ in $I_1$.  This sequence is defined recursively as follows:

$$ l_1=\delta_1$$%, \,  
$$l_2=\delta_2\cdot \frac{1}{1-\delta_1}=\frac{\delta_2}{1-l_1}$$%, \, 
$$l_3= \delta_3\cdot\frac{1}{1-\delta_1}\cdot  \frac{1  }{1-\frac{\delta_2}{1-\delta_1}}  =\frac{\delta_3}{1-\delta_1-\delta_2}=\frac{\delta_3}{(1-l_1)\cdot (1-l_2)} $$%%\frac{\frac{\delta_3^1}{1-\delta_1^1}}{1-l_2}=
$$...$$
So, it may be proved by induction that the length of the biggest gap after $n-1$ steps is
$$  l_n=\frac{\delta_n}{1- \sum_{k=1}^{n-1} \delta_k }=\frac{\delta_{n}}{\Pi_{k=1}^{n-1}(1-l_k)}.$$

Thus, since (as said before)  the sum $\sum_{k=1}^{+\infty} \delta_k$ is strictly smaller than 1 and $\{\delta_n\}_{n\in \mathbb{N}}$ tends to 0 when $n$ tends to infinity, we conclude that $ \{l_n\}_{n\in \mathbb{N}}$ tends to 0 when $n$ tends to infinity.

Therefore, we have constructed a sequence of strictly increasing functions $\{g_n\}_{n\in \mathbb{N}}$, where $g_n= f_n\circ \cdots \circ f_1 $, such that $g_n$ is $\epsilon_n$-continuous, for a family of positive values $\{\epsilon_n\}_{n\in \mathbb{N}}$ such that $\epsilon_n$ tend to 0 when $n$ increases. As a matter of fact, since the sequence $\{\delta_n\}_{n\in \mathbb{N}}$ is decreasing and converges to 0, notice that the sequence of functions $\{g_n\}_{n\in \mathbb{N}}$ is a pointwise Cauchy sequence. To see that, notice that --since $\{\delta_n\}_{n\in \mathbb{N}}$ is decreasing and converges to 0-- the sequence of functions $\{f_n\}_{n\in \mathbb{N}}$ converge to the identity function, and since $g_{n+1}=f_{n+1}\circ g_n$, the Cauchy property is deduced.  Hence, there exists the limit function $g$ of $\{g_n\}_{n\in \mathbb{N}}$. 

Let's see that $g$ is continuous as well as strictly increasing.

On one hand, for any $\epsilon>0$, there is $n_\epsilon\in \mathbb{N}$ such that the length of the bad gaps of $g_n(S)$ are smaller than $\epsilon$, for any $n\geq n_\epsilon$. Hence, the bad gaps of $g(S)$ have no positive length, i.e. there are no bad gaps in $g(S)$.

On the other hand, $g $ is still strictly increasing. To see that, first notice that if two point $x,y\in S$ are in the same side of a gap, i.e. $x,y\leq a_1$ or $x,y\geq b_1$ for a gap $G_1=[a_1,b_1)$ or  $G_1=(a_1,b_1]$, then after apply $f_1$ the distance between $x,y$ increases such that $d(f_1(x),f_1(y))=\frac{d(x,y)}{1-\delta_1}$, where $1>\delta_1>0$ is the length of the gap.  
The distance between $x$ and $y$ is reduced just in case $x\leq a_1$ and $y\geq b_1$ (or viceversa). In that case, it holds that $d(f_1(x),f_1(y))=\frac{1}{1-\delta_1}\cdot (d(x,y)-\delta_1)$.
It can be proved that after repeat this contraction process $n$ times (i.e. after apply function $g_n= f_n\circ \cdots \circ f_1 $) the distance achieved is
$$  d(g_n(x),g_n(y))=\frac{1}{1-\sum_{k=1}^{n} \delta_k}\cdot (d(x,y)-\sum_{k=1}^{n} \delta_k) .$$
In fact, we may argue by induction and assume that after $n$ steps, the distance is as described before.
Hence, in the next step, for $n+1$, the distance would be

$$  d(g_{n+1}(x),g_{n+1}(y))=\frac{1}{1-l_{n+1}}  \cdot (d(g_n(x),g_n(y))-l_{n+1}),$$
where $l_{n+1}$ is the length of the biggest gap between $g_n(x)$ and $g_n(y))$ and that comes from the transformation of the initial gap of length $\delta_n$  through the $n$ previous steps. Hence, $l_{n+1}$ is as described before, $ l_{n+1}=\frac{\delta_{n+1}}{1- \sum_{k=1}^{n} \delta_k }.$
Thefore, replacing $l_{n+1}$ by  $\frac{\delta_{n+1}}{1- \sum_{k=1}^{n} \delta_k }$ in the equation:

$$  d(g_{n+1}(x),g_{n+1}(y))=\frac{1}{1-\frac{\delta_{n+1}}{1- \sum_{k=1}^{n} \delta_k }}  \cdot (d(g_n(x),g_n(y))-\frac{\delta_{n+1}}{1- \sum_{k=1}^{n} \delta_k }).$$

Finally, after replace the value $  d(g_n(x),g_n(y))=\frac{1}{1-\sum_{k=1}^{n} \delta_k}\cdot (d(x,y)-\sum_{k=1}^{n} \delta_k)$ and simplify, we achieve the desired result:

$$  d(g_{n+1}(x),g_{n+1}(y))=\frac{1}{1-\sum_{k=1}^{{n+1}} \delta_k}\cdot (d(x,y)-\sum_{k=1}^{{n+1}} \delta_k) .$$

Thus, given any two points $x<y$ in $S$, in the limit --i.e. after apply function $g$-- the distance between  $g(x)$ and $g(y)$  is at least, as bigger as

$$  \frac{1}{1-\sum_{k=1}^{+\infty} \delta_k}\cdot (d(x,y)-\sum_{k=1}^{+\infty} \delta_k) .$$

By hypothesis, there is no discontinuous Cantor set contained in $S$, hence, $\sum_{k=1}^{+\infty} \delta_k$ is stricly smaller than $d(x,y)$ (and, in particular,   stricly smaller than 1).
Thus,  $g(x)<g(y)$ for any $x<y$. This concludes the proof.  
\end{proof}

\begin{remark}\rm
Given $\epsilon_0>0$,
 the proof before can be limited to a finite number of steps just in order to achieve $\epsilon_0$-continuity through a constructive and finite process. This may be interesting for programming purposes.
\end{remark}

%\bigskip

%\begin{remark}
%If we assume the absence of discontinuous Cantor sets, then the folowing alternative proof may be interesting:
%\medskip

%%\noindent\textbf{Alternative proof:}
%If it satifies the necessary conditions (NC), then it is $\epsilon$-continuously representable. Hence, for any $n\in \mathbb{N}$ there exists a SS-representation $(u_n,1)$ such that the length of the biggest gap is less than $\frac{1}{n}$. As a matter of a fact, given $\frac{1}{n_0}$, in the previus proof we used a method to construct a SS-representation $(u_{n_0}, 1)$ which is $\frac{1}{n_0}$ continuous. Hence, for  $\frac{1}{n_0+1},$ we proceed analogously but now starting the process from $u_{n_0}$.

%In consecuence, we have a sequence of functions $\{u_n\}$, that in fact, it is a Cauchy sequence with respect to the supremum norm. Therefore, there exists the limit function $u$ which  is in fact   a SS-representation. To see that, first notice that for any function $g_k$ applied on $\mathbb{R}$, it holds that $g(x+1)=g(x)+1$, for any $x\in \mathbb{R}$. Therefore, in the limit the inequality $u_n(x)+1<u_n(y)$ may be broken just in case of a contraction to a point of a non-degenerate subinterval $I$, but in that case there is no more that one element of $X$ which image lies in $I$.
%Finally, since the length  of the biggest gap of $u$ is less than $\frac{1}{n}$ for any $n\in \mathbb{N}$, we conclude that $u$ is continuous.

%\end{remark}

\section{Concluding remarks}\label{s5}
Since a semiorder rarely would be continuously representable, 
  the concept of $\epsilon$-continuity seems a successful  tool when dealing with semiorders that fail to be continuously representable, but also for those that may be continuously represented.

%  As a matter of a fact,  $\epsilon$-continuity allows us to approach to continuity as much as desired, in a constructive manner  and in a finite number of steps, which seems advantageous for computing purposes.  
 Furthermore, %we present a constructive and finite method in order to achieve an $\epsilon_0$-continuous SS-representation for a given $\epsilon_0>0$. %This method allows us to  algorithmically contruct (in a finite number of steps)  a SS-representation $v$ of the semiordered set $(X,\precsim)$ such that the length of the gaps of $v(X)$ are less than $\epsilon$.%as well as the inequality $x$
 %
 %As a matter of a fact,
  through this idea we have presented a characterization for the continuous  SS-representability of bounded semiorders. For any $\epsilon_0>0$, the construction of the desired $\epsilon_0$-continuous  representation may be done in a finite number of steps, and this may be interesting for programming purposes. %, as well as we explain how to construct this representation 
%In case of a positive answer, these conjectures can be summarized 
 
 Hence, analogous result of the famous  \emph{Debreu's Open Gap Lemma}  but for bounded semiorders has been proved. That is, now, given any bounded subset  $S\subseteq \mathbb{R}$,
we chacarterized when  there exists a strictly increasing function
$g\colon S\to \mathbb{R}$ such that all the gaps of
$g(S)$ are open as well as $g$  satisfies the geometrical condition
 $x \prec y \Leftrightarrow g(x) + 1 < g(y)$, for every $x,y \in X$.

 Some other proof for these results may exist, in particular, by induction on the length of $u(X)$ (i.e., on the length of the longest chain $x_1\prec x_2\prec \cdots \prec x_l$) or aggregating continuous representations $u_1\colon (X_1,\tau_{|X_1})\to (\mathbb{R}, \tau_u)$ and $u_2\colon (X_2,\tau_{|X_2})\to (\mathbb{R}, \tau_u)$ of a semiorder on $(X,\tau)$, with $X=X_1\cup X_2$.

% \section*{Acknowledgements}
% The author acknowledges financial support 
% from the Ministry of Economy and Competitiveness of Spain under grants MTM2015-63608-P and ECO2015-65031.
%%%% as well as from the Basque Government under grant IT974-16. 
% \section*{References}

%\emph{I am  grateful for the helpful advice of the anonymous
%referees, to whom I am indebted for the detailed reading of previous versions
%of the manuscript and their very helpful suggestions and comments.
%}

\end{document}